\newcommand{\Nor}{\mathcal{N}}
\DeclareMathOperator*{\argmax}{arg\,max} 
\renewcommand{\algocf@captiontext}[2]{#1\algocf@typo. \AlCapFnt{}#2} 
\def\@algocf@capt@plain{top}
\renewcommand{\algocf@makecaption}[2]{%
  \addtolength{\hsize}{\algomargin}%
  \sbox\@tempboxa{\algocf@captiontext{#1}{#2}}%
  \ifdim\wd\@tempboxa >\hsize
    \hskip .5\algomargin%
    \parbox[t]{\hsize}{\algocf@captiontext{#1}{#2}}
  \else%
    \global\@minipagefalse%
    \hbox to\hsize{\box\@tempboxa}
  \fi%
  \addtolength{\hsize}{-\algomargin}%
}
\begin{document}

\jname{Biometrika}
\jyear{2024}
\jvol{103}
\jnum{1}
\cyear{2024}
\accessdate{Advance Access publication on 29 November 2024}

\received{2 January 2017}
\revised{1 August 2023}

\markboth{M. Goplerud et~al.}{Variational mixed models}


\title{Partially factorized variational inference for high-dimensional mixed models}

\author{M. Goplerud}
\affil{Department of Government, University of Texas at Austin
\email{mgoplerud@austin.utexas.edu}}

\author{O. Papaspiliopoulos}
\affil{Bocconi University \email{omiros@unibocconi.it} }

\author{\and G. Zanella}
\affil{Department of Decision Sciences, Bocconi University, via Roentgen 1, 20136
Milan, Italy\email{giacomo.zanella@unibocconi.it}}

\nolinenumbers
\maketitle

\begin{abstract}
While generalized linear mixed models  are a fundamental tool in applied statistics, many specifications, such as those involving categorical factors with many levels or interaction terms, can be computationally challenging to estimate due to the need to compute or approximate high-dimensional integrals.
Variational inference is a popular way to perform such computations, especially in the Bayesian context. 
However, naive use of such methods can provide unreliable uncertainty quantification.
We show that this is indeed the case for mixed models, proving that standard mean-field variational inference dramatically underestimates posterior uncertainty in high-dimensions.
We then show how appropriately relaxing the mean-field assumption leads to methods whose uncertainty quantification does not deteriorate in high-dimensions, and  whose total computational cost scales linearly with the number of parameters and observations.
Our theoretical and numerical results focus on mixed models with Gaussian or binomial likelihoods, and rely on connections to random graph theory to obtain sharp high-dimensional asymptotic analysis.
We also provide generic results, which are of independent interest, relating the accuracy of variational inference to the convergence rate of the corresponding coordinate ascent   algorithm that is used to find it. 
Our proposed  methodology is implemented in the R package \href{https://github.com/mgoplerud/vglmer}{vglmer}. Numerical results with simulated and real data examples illustrate the favourable computation cost versus accuracy trade-off of our approach compared to various alternatives. 
\end{abstract}

\begin{keywords}   Bayesian computation; coordinate ascent; data augmentation; hierarchical model;  random graph
\end{keywords}

\section{Introduction}
\label{sec:intro}

Generalized linear mixed models involve fixed and random effects associated with different sets of covariates (see, e.g., Chapters 2 and 3.4 of \citealt{wood}). They are a foundational and widely used tool across multiple disciplines. 
In many applications, categorical factors associated with the random effects have a large number of levels (e.g., in the hundreds or thousands), which leads to models with many parameters. 
When interaction terms are included as random effects, the number of parameters also becomes large very quickly. We refer to the resulting models as high-dimensional. 
 This setting is common in current social science applications; for example, political scientists use these models as part of estimating public opinion at hundreds or thousands of geographic units (e.g., \citealt{warshaw2012district,broockman2018bias}) or rely on models with so-called deep interactions 
(\citealt{ghitza}).  A different application of such models is for recommendation systems where the categorical factors correspond to customer and products (e.g., \citealt{perry,GaoOwen2017EJS}).

In this article, we focus on variational inference  for high-dimensional mixed models. 
The goal of variational inference is to approximate a probability distribution $\pi$, typically the posterior distribution of a Bayesian model,
by a more tractable one in a family $\mathcal{Q}$ of approximating distributions \citep{bishop}. 
Arguably the most popular methodology approximates $\pi$ with the distribution $q^*\in\mathcal{Q}$ that minimizes the Kullback-Leibler divergence to $\pi$. Very common is mean-field variational inference where  $\mathcal{Q}$ assumes independence across parameters and $q^*$ is computed with the so-called coordinate ascent variational inference algorithm (see, e.g., \citealt{blei2017variational}).
While being computationally convenient, the mean-field assumption can lead to a poor approximation, especially in terms of underestimation of uncertainty relative to $\pi$.
This is indeed the case for high-dimensional mixed models, as we formally show in Section \ref{sec:theory-GLMM} (see in particular Theorem \ref{th:ff}). 
There, under certain assumptions, we quantify the degree of variance underestimation of 
$q^*$ relative to $\pi$ and show it
increases with the number of observations $n$ and parameters $p$. 
The results suggest  that standard mean-field approximations, which we refer to as fully factorized,  
do not provide reliable uncertainty quantification for high-dimensional mixed models.

Motivated by such results, we consider alternative families  of  variational approximations that differ in the dependence structure they assume on fixed and random effects. 
In particular, we introduce a partially factorized  family which leads to a favourable trade-off between computational cost and approximation accuracy. 
We show that, under assumptions similar to the ones of Theorem \ref{th:ff} (see Theorems \ref{th:pf} and \ref{th:rg}),  both the degree of variance underestimation obtained with the new family, and the convergence rate of the coordinate ascent algorithm required to compute the variational approximation,   do not deteriorate as $n$ and $p$ increase. Equivalently, the number of iterations required to approximate $q^*$ with a fixed level of accuracy does not increase with $n$ and $p$. 
In terms of practical implications,  the results suggest that the improvement in accuracy comes with a negligible increase in computational cost.

The results are based on an interesting duality between the convergence rate of the coordinate ascent algorithm and the accuracy of the corresponding $q^*$. 
This connection, which is not specific to mixed models and is of independent interest, is obtained  in Section \ref{sec:theory} (see in particular Theorem \ref{th:rs-pf-cavi-conv}) for generic Gaussian targets. 
Intuitively, the result implies that when the variational approximation is poor in terms of uncertainty quantification, the corresponding coordinate ascent algorithm is also slow to converge and vice versa. 

We illustrate our proposed methodology and theory through numerical simulations and the deep interaction models from \cite{ghitza}. 
Despite dealing with more complex settings than the ones analysed in the theoretical part of this paper, 
the numerics confirm what the theory predicts and 
illustrate good computational cost versus approximation accuracy trade-offs when using our proposed partially factorized variational inference. All examples can be reproduced using code found at \url{https://github.com/mgoplerud/pfvi_glmm_replication}. 
The connections and differences between our approach and previous works that explore modified or relaxed versions of the mean-field assumption are discussed in  Section \ref{sec:lit}. 

\section{Generalized linear mixed models}
\label{sec:models}
We consider mixed models of the form
\begin{equation}
\label{eq:glmm}
\begin{aligned}
y_i &\mid \eta_i \sim p(y_i \mid \eta_i),\quad \eta_i = x_i^T\beta + \sum_{k=1}^K z_{i,k}^T \alpha_k,&i=1,\dots,n,\\
\alpha_{k,g} &\sim \Nor \left(0, \gamma \Sigma_k\right), \quad \alpha_k = (\alpha_{k,1}^T, \cdots, \alpha_{k,G_k}^T)^T,&g=1,\dots,G_k;k=1,\dots,K.
\end{aligned}
\end{equation}
Here $n$ denotes the number of observations, $\eta_i\in\mathbb{R}$ are linear predictors,  $\beta\in\mathbb{R}^{D_0}$ are commonly referred to as fixed effects (see, e.g., Chapter 9 of \citealt{diggle}) and the corresponding covariates $x_i\in\mathbb{R}^{D_0}$ contain an intercept and other variables of interest, such as the respondent's income in our analyses in Section~\ref{sec:numerics}. 
The model contains $K$ sets of random effects, $\alpha_k$ for $k \in \{1, \cdots, K\}$,  
corresponding to categorical factors with $G_k$ levels each, such as the state in which a survey respondent lives. 
The associated parameters, $\alpha_{k,g}\in\mathbb{R}^{D_k}$ for $g=1,\dots,G_k$, have corresponding covariates $w_{i,k}\in\mathbb{R}^{D_k}$ and are assigned a normal prior.
A random intercept model has $D_k = 1$ and $w_{i,k} = 1$, while models with $D_k > 1$ are commonly referred to as ones with  random slopes. For each factor $k$, an observation $i$ is assigned to exactly one level whose membership is encoded in the one-hot vector $m_{i,k} = (m_{i,k,1}, \dots, m_{i,k,G_k}) \in \{0, 1\}^{G_k}$ with $\sum_{g=1}^{G_k}m_{i,k,g}=1$. 
We define the vector $\alpha_{k}=(\alpha_{k,1}^T,\dots,\alpha_{k,G_K}^T)^T\in\mathbb{R}^{G_kD_k}$ with covariates $z_{i,k} = m_{i,k} \otimes w_{i,k}\in\mathbb{R}^{G_kD_k}$, where $\otimes$ denotes the Kronecker product, so that 
$z_{i,k}^T \alpha_k=\sum_{g=1}^{G_k}m_{i,k,g}w_{i,k}^T\alpha_{k,g}
$.

We focus on two important special cases of \eqref{eq:glmm}, which vary by the choice of likelihood function, $p(y_i | \eta_i)$, and discuss extensions in Section \ref{sec:conc}. First, we assume that it  is Gaussian with an identity link and, according to common practice and  software implementations (e.g., \citealt{bates2015lmer}; \citealt{wood}), we scale the variance of the random effects by the residual variance:
\begin{equation}
\begin{aligned}
p(y_i \mid  \eta_i)&=
\Nor \left(y_i; \eta_i, \sigma^2\right)\,, \quad \gamma=\sigma^2\,.
\end{aligned}
\label{eq:gauss}
\end{equation}
Second, we consider a binomial model with a logistic link function where we use Polya-Gamma data augmentation \citep{polson2013polyagamma} that introduces additional latent variables $\omega_i$:
\begin{equation}
\begin{aligned}
p(y_i \mid \eta_i)&= \mathrm{Binomial}\left(y_i; n_i, \frac{\exp(\eta_i)}{1+\exp(\eta_i)}\right)\,, \quad \gamma=1\,, \\
p(y_i, \omega_i \mid \eta_i) &= \frac{1}{2^{n_i}} \exp\left(\left[y_i - n_i/2\right]\eta_i - \omega_i \frac{\eta_i^2}{2}\right) f_{PG}(\omega_i | n_i, 0)\,;
\end{aligned}
\label{eq:binom}
\end{equation}
 where $f_{PG}(\omega| b,c)$ indicates a Polya-Gamma variable with parameters $b$ and $c$. Alternative treatments of the binomial model are discussed in Section \ref{sec:lit}.

 We adopt a Bayesian approach and assign prior distributions to the model parameters. In particular,  the components of $\beta$ are assigned improper flat priors and the matrices $\Sigma_k$ are given inverse Wishart distributions, $\mathrm{IW}(a_k^0,\Phi_k^0)$, which become simply inverse gamma distributions when  $D_k=1$. In the Gaussian model, we also assign the improper prior $1/\sigma^2$ to $\sigma^2$.

 Throughout the article we denote by  $\theta=(\theta_0, \theta_1,\dots,\theta_K)=(\beta, \alpha_1,\dots,\alpha_K)$ the vector of all fixed and random effects and by $y=(y_1,\dots,y_n)$ the $n$ data points. We let $\phi$ denote the collection of all $\Sigma_k$ and any additional parameters or latent variables, e.g., the residual variance $\sigma^2$ in the Gaussian model or the Polya-Gamma latent variables $(\omega_i)_{i=1}^n$ in the binomial model.

\section{Partially factorized variational inference}

\subsection{Mean-field approximations}
\label{sec:var-fam}


We consider variational approximations $q^*(\theta, \phi)$ of $p(\theta,\phi\mid y)$ defined as 
\begin{equation}
\label{eq:ELBO}
q^*
= \argmax_{q \in \mathcal{Q}} \mathbb{E}_{q(\theta, \phi)}\left[\log p(\theta, \phi,y) - \log q(\theta, \phi)\right].
\end{equation}
The expectation maximized in \eqref{eq:ELBO} is usually referred to as evidence lower bound (ELBO).
We note that while in our context $p(\theta,\phi\mid y)$ represents the joint posterior distribution of unknown parameters for GLMMs defined in Section~\ref{sec:models}, the notation and formulas of this section apply to arbitrary distributions defined over a set of variables $(\theta,\phi)$.

We consider approximations $q^*$ arising from different choices for the family of approximating densities $\mathcal{Q}$, 
which differ in the factorization they assume across the elements of $\theta$. 
First, we define the fully factorized
family as
\begin{equation}
  \mathcal{Q}_{FF} = \left \{q\,:\, q(\theta,\phi) = \left[\prod_{k=0}^K q(\theta_k)\right] q\left(\phi\right)\right\}, 
  \label{eq:FFVI}
\end{equation}
where the above notation refers to the set of all joint probability distributions over $(\theta,\phi)$ that factorize as indicated. Second, we define the unfactorized
family as 
\begin{equation}
  \mathcal{Q}_{UF} = \left\{q\,:\, q(\theta,\phi) =  q(\theta) q(\phi)\right \}.
  \label{eq:UFVI}
\end{equation}
Finally, we define the partially factorized
family as   
\begin{equation}
  \begin{aligned}
  \mathcal{Q}_{PF} & = \left\{q\,:\, q(\theta,\phi) =  q\left(\theta_{\mathcal{C}}|\theta_{\mathcal{U}}\right)
   \left[\prod_{k \in \mathcal{U}} q(\theta_{k})\right] q(\phi) \right \}\,,
  \end{aligned}
   \label{eq:PFVI}
\end{equation}
where $\theta_{\mathcal{C}}  =(\theta_{k})_{k \in \mathcal{C}}$, $\theta_{\mathcal{U}} =(\theta_{k})_{k \in \mathcal{U}}$ and $(\mathcal{C},\mathcal{U})$ is an arbitrary partition of $\{0,\dots,K\}$ in two blocks, i.e., $\mathcal{C}\subseteq \{0,\ldots, K\}$ and $\mathcal{U} = \{0, \ldots, K\} \setminus \mathcal{C}$. 
Note that when $\mathcal{C} = \emptyset$ we obtain $\mathcal{Q}_{PF}=\mathcal{Q}_{FF}$ while when 
$\mathcal{C} = \{0,\ldots, K\}$ we obtain $\mathcal{Q}_{PF}=\mathcal{Q}_{UF}$. Later in the article, with some abuse of notation write $\mathcal{U}\backslash k$ to refer to the set $\mathcal{U}$ with $k\in \mathcal{U}$ excluded.

\subsection{Coordinate ascent variational inference}
\label{sec:pf_cavi}
The optimizer $q^*$ in  \eqref{eq:ELBO} is usually computed 
 using a coordinate ascent algorithm that scans through the terms in the factorization defined by  $\mathcal{Q}$, maximizing the objective function in \eqref{eq:ELBO} with respect to one term while holding the other terms fixed.
For many classes of models and factorizations, the optimal densities in each coordinate  maximization step belong to specific parametric families and, as a result, the updating of factors boils down to updating their parameters.   
 Whereas the coordinate ascent steps for $\mathcal{Q}_{FF}$ are quite standard and can be found in textbooks \citep[ch. 10]{bishop}, those for $\mathcal{Q}_{PF}$ are less so and for this reason we characterize them in Proposition \ref{prop:PFCAVI} below.  The algorithm for finding $q^* \in \mathcal{Q}_{PF}$, is simply obtained by scanning sequentially through the coordinate updates in \eqref{eq:cavi_phi_general}-\eqref{eq:cavi_theta_U_general} below. 
 The proof of the Proposition is simple but instructive and it introduces a key intermediate distribution in \eqref{eq:tilde_pi_def} that our theory focuses on later. The specific densities involved for mixed models are given later in Section \ref{sec:cavi-glmm}. 

 \begin{proposition}
    Let $p(\theta,\phi\mid y)$ be the target density
and $\mathcal{Q}_{PF}$  be defined in \eqref{eq:PFVI}. 
    Then, the coordinate-wise optimizers of the objective function \eqref{eq:ELBO} are as follows:
    \begin{align}
      q(\phi) & \gets {1 \over c} \exp\left\{ \mathbb{E}_{ q(\theta)}[\log p(\theta,\phi, y)] \right\} 
\label{eq:cavi_phi_general}
\\
      q(\theta_{\mathcal{C}} \mid \theta_{\mathcal{U}}) & \gets 
{\exp\left \{\mathbb{E}_{q(\phi)} [\log p(\theta,\phi, y)] \right\}
        \over \int \exp\left \{\mathbb{E}_{q(\phi)} [\log p(\theta,\phi, y)] \right\} \mathrm{d} \theta_{\mathcal{C}}} 
\label{eq:cavi_theta_C_general}\\
      q(\theta_{k}) & \gets {1\over c}  
\exp\left\{
\mathbb{E}_{q(\theta_{\mathcal{U}\backslash k})} \left [\log \int 
\exp\left \{\mathbb{E}_{q(\phi)} [\log p(\theta,\phi, y)] \right\}
\mathrm{d} \theta_{\mathcal{C}} \right ]
\right\}
,\quad k \in \mathcal{U},
\label{eq:cavi_theta_U_general}
    \end{align}
where $c$ is a different constant of integration in each case, $q\left(\theta\right)=q\left(\theta_{\mathcal{C}}\mid \theta_{\mathcal{U}}\right)  \left[\prod_{k \in \mathcal{U}} q(\theta_{k})\right]$, $q(\theta_{\mathcal{U}\backslash k})  = \prod_{m\in \mathcal{U},m\neq k} q(\theta_{m})$, and where the updates in \eqref{eq:cavi_theta_U_general} assume that $q(\theta_{\mathcal{C}}\mid \theta_{\mathcal{U}})$ has already been updated according to \eqref{eq:cavi_theta_C_general}. 
    \label{prop:PFCAVI}
  \end{proposition}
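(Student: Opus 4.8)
The plan is to establish each of the three coordinate updates by isolating, one at a time, the terms of the evidence lower bound that depend on the factor being optimized, and recognizing the resulting functional as a negative Kullback--Leibler divergence plus a constant, so that non-negativity of the divergence pins down the maximizer. The unifying object, and the intermediate distribution anticipated in the statement, is
\[
\tilde{\pi}(\theta) = \frac{\exp\left\{\mathbb{E}_{q(\phi)}[\log p(\theta,\phi,y)]\right\}}{\int \exp\left\{\mathbb{E}_{q(\phi)}[\log p(\theta,\phi,y)]\right\} \mathrm{d}\theta},
\]
the density over $\theta$ obtained by integrating out $\phi$ under the current $q(\phi)$; its conditional $\tilde{\pi}(\theta_{\mathcal{C}}\mid\theta_{\mathcal{U}})$ is the right-hand side of \eqref{eq:cavi_theta_C_general}, while its marginal $\tilde{\pi}(\theta_{\mathcal{U}})$ will drive \eqref{eq:cavi_theta_U_general}. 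For the $q(\phi)$ update \eqref{eq:cavi_phi_general}, $\phi$ enters \eqref{eq:PFVI} as a free-standing factor, so this is the classical mean-field step: the $q(\phi)$-dependent part of the ELBO equals $\mathbb{E}_{q(\phi)}\left[\mathbb{E}_{q(\theta)}[\log p(\theta,\phi,y)] - \log q(\phi)\right]$, which is $-\mathrm{KL}\!\left(q(\phi)\,\|\,\tilde{q}(\phi)\right)$ up to a constant, with $\tilde{q}(\phi)\propto\exp\{\mathbb{E}_{q(\theta)}[\log p(\theta,\phi,y)]\}$, and is therefore maximized at the stated target.

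The conditional update \eqref{eq:cavi_theta_C_general} is where the argument departs from standard mean field, and where I would spend most of the effort. The point is that $q(\theta_{\mathcal{C}}\mid\theta_{\mathcal{U}})$ is not a single density but a family of densities indexed by $\theta_{\mathcal{U}}$, which may be chosen independently for each $\theta_{\mathcal{U}}$, and whose variation alters neither the marginal factors $\{q(\theta_k)\}_{k\in\mathcal{U}}$ nor the weight $q(\theta_{\mathcal{U}})=\prod_{k\in\mathcal{U}}q(\theta_k)$. I would write the $\theta$-part of the ELBO as $\mathbb{E}_{q(\theta_{\mathcal{U}})}\big[\int q(\theta_{\mathcal{C}}\mid\theta_{\mathcal{U}})\{\mathbb{E}_{q(\phi)}[\log p] - \log q(\theta_{\mathcal{C}}\mid\theta_{\mathcal{U}})\}\,\mathrm{d}\theta_{\mathcal{C}}\big]$, and decompose $\mathbb{E}_{q(\phi)}[\log p]=\log\tilde{\pi}(\theta_{\mathcal{C}}\mid\theta_{\mathcal{U}})+\log Z(\theta_{\mathcal{U}})$ with $Z(\theta_{\mathcal{U}})=\int\exp\{\mathbb{E}_{q(\phi)}[\log p]\}\,\mathrm{d}\theta_{\mathcal{C}}$. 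The inner integral then becomes $-\mathrm{KL}\!\left(q(\cdot\mid\theta_{\mathcal{U}})\,\|\,\tilde{\pi}(\cdot\mid\theta_{\mathcal{U}})\right)+\log Z(\theta_{\mathcal{U}})$, whose second term is free of the conditional being optimized; maximizing the outer expectation pointwise in $\theta_{\mathcal{U}}$ thus forces $q(\theta_{\mathcal{C}}\mid\theta_{\mathcal{U}})=\tilde{\pi}(\theta_{\mathcal{C}}\mid\theta_{\mathcal{U}})$, which is \eqref{eq:cavi_theta_C_general}.

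For the marginal updates \eqref{eq:cavi_theta_U_general} I would exploit that the conditional factor has just been set to its optimum. Substituting $q(\theta_{\mathcal{C}}\mid\theta_{\mathcal{U}})=\tilde{\pi}(\theta_{\mathcal{C}}\mid\theta_{\mathcal{U}})$ cancels $\log\tilde{\pi}(\theta_{\mathcal{C}}\mid\theta_{\mathcal{U}})$ against the conditional-entropy term, collapsing the $\theta$-part of the ELBO to $\mathbb{E}_{q(\theta_{\mathcal{U}})}[\log Z(\theta_{\mathcal{U}})]-\sum_{m\in\mathcal{U}}\mathbb{E}_{q(\theta_m)}[\log q(\theta_m)]$. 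Since $Z(\theta_{\mathcal{U}})\propto\tilde{\pi}(\theta_{\mathcal{U}})$, this is exactly a mean-field ELBO for the independent factors $\{q(\theta_m)\}_{m\in\mathcal{U}}$ against the effective target $\tilde{\pi}(\theta_{\mathcal{U}})$, so the textbook CAVI update $q(\theta_k)\propto\exp\{\mathbb{E}_{q(\theta_{\mathcal{U}\backslash k})}[\log Z(\theta_{\mathcal{U}})]\}$ applies and reproduces \eqref{eq:cavi_theta_U_general}.

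The main obstacle, and the only genuinely new ingredient, is the conditional step: justifying that optimizing a conditional factor decouples into independent per-$\theta_{\mathcal{U}}$ Kullback--Leibler minimizations, and checking that $Z(\theta_{\mathcal{U}})$ is finite so that $\tilde{\pi}(\theta_{\mathcal{C}}\mid\theta_{\mathcal{U}})$ is a bona fide conditional density. Everything else is routine bookkeeping, namely isolating the factor-dependent terms of the ELBO and invoking that the divergence is minimized, and equals zero, precisely when its two arguments coincide.
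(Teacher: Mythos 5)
Your proposal is correct and follows essentially the same route as the paper: both introduce the intermediate distribution $\pi(\theta)\propto\exp\{\mathbb{E}_{q(\phi)}[\log p(\theta,\phi,y)]\}$, split the $\theta$-part of the ELBO into an expected conditional Kullback--Leibler term (optimized pointwise in $\theta_{\mathcal{U}}$ to give $\pi(\theta_{\mathcal{C}}\mid\theta_{\mathcal{U}})$) plus a marginal term that reduces to standard mean-field coordinate ascent on $\pi(\theta_{\mathcal{U}})$. Your write-up merely makes explicit two steps the paper leaves implicit, namely the pointwise-in-$\theta_{\mathcal{U}}$ optimization over the family of conditionals and the identification $Z(\theta_{\mathcal{U}})\propto\pi(\theta_{\mathcal{U}})$.
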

  \begin{proof}
  The update in \eqref{eq:cavi_phi_general} is obtained by a standard argument for mean-field coordinate ascent, see e.g. Ch. 10 in \cite{bishop}. 
The ones in \eqref{eq:cavi_theta_C_general} and \eqref{eq:cavi_theta_U_general} are obtained as follows.
Let
  \begin{equation}\label{eq:tilde_pi_def}
 \pi(\theta) = {\exp\left \{\mathbb{E}_{q(\phi)} [ \log p(\theta,\phi, y)] \right\}
        \over \int \exp\left \{\mathbb{E}_{q(\phi)} [\log p(\theta,\phi, y)] \right\} \mathrm{d} \theta}
  \end{equation}
be a probability distribution over $\theta$. Note that  $\pi(\theta)$ depends on $q(\phi)$.
  Then, it holds

\[
\mathbb{E}_{q(\theta, \phi)}\left[\log p(\theta, \phi,y) - \log q(\theta, \phi)\right]
= \mathbb{E}_{q(\theta)} \left[\log \frac{ \pi(\theta_{\mathcal{C}} \mid \theta_{\mathcal{U}})}{q(\theta_{\mathcal{C}} \mid \theta_{\mathcal{U}}) }\right]
+\mathbb{E}_{q(\theta_{\mathcal{U}})} \left[\log \frac{\pi(\theta_{\mathcal{U}}) }{ q(\theta_{\mathcal{U}})} \right]+ c\,,
\]
where $c$ contains terms that do not depend on $q(\theta)$. Optimizing this over $q(\theta_{\mathcal{C}} \mid \theta_{\mathcal{U}})$ we obtain directly that the optimizer is  $\pi(\theta_{\mathcal{C}} \mid \theta_{\mathcal{U}})$, hence derive  \eqref{eq:cavi_theta_C_general}. 
Given $q(\theta_{\mathcal{C}} \mid \theta_{\mathcal{U}})=\pi(\theta_{\mathcal{C}} \mid \theta_{\mathcal{U}})$ and 
$q(\theta_{\mathcal{U}})=\prod_{k\in \mathcal{U}} q(\theta_k)$, we obtain that the $q(\theta_k)$ coordinate-wise updates maximize
\begin{equation}
\mathbb{E}_{q(\theta{_{\mathcal{U}}})}\left[ \log \frac{\pi(\theta_{\mathcal{U}}) }{\prod_{k\in \mathcal{U}} q(\theta_k)}\right] + c,
  \label{eq:pfcavi-marg}
\end{equation}
where $c$ contains terms that do not depend on $q(\theta_{\mathcal{U}})$. The expression in \eqref{eq:cavi_theta_U_general} follows by the fact that the objective function in \eqref{eq:pfcavi-marg} coincides with the one of standard mean-field coordinate ascent applied to $\pi(\theta_{\mathcal{U}})$, and by the definition of $\pi(\theta)$.
  \end{proof}

It is useful to contrast the range of applicability of partially factorized coordinate ascent relative to the common fully factorized one. 
The update of $q(\phi)$ is  basically the same in both. 
That of $q(\theta_{\mathcal{C}} \mid \theta_{\mathcal{U}})$  is at least as easy with a partially factorized family compared to a fully factorized one, since in \eqref{eq:cavi_theta_C_general} we condition on $\theta_{\mathcal{U}}$ rather than integrating it out in the associated computations. 
The update of $q(\theta_{k})$ for $k \in \mathcal{U}$, instead, requires an additional integration (with respect to $\theta_{\mathcal{C}}$) relative to fully factorized variational inference. Section~\ref{sec:cavi-glmm} shows how to carry out this step efficiently for both Gaussian and binomial mixed models, since in these cases $\pi(\theta)$ is a high-dimensional Gaussian distribution (see, e.g., Proposition \ref{prop:pi}).
One should expect the computational cost of this step to grow with the size of $\mathcal{C}$, see  Section \ref{sec:comp_cost} for more details.  The connection between the partially factorized framework we develop here and other approaches to relaxing the mean-field assumption are discussed in Section \ref{sec:lit}.

\section{The accuracy-convergence duality for variational inference}
\label{sec:theory}

\subsection{Framework for theoretical analysis}

In this section, we derive an explicit connection between the accuracy of variational inference and the convergence  of the corresponding coordinate ascent algorithm. We focus on a framework with a generic Gaussian target distribution $\pi(\theta)$.  This is of interest as a canonical setting in its own right, but also as the intermediate target we encounter for mixed models when $q(\phi)$ is fixed; Section \ref{sec:theory-GLMM} uses these results to study accuracy and convergence for mixed models specifically.
Formally, this section considers the following set-up. 
\begin{assumption}
  \label{ass:sub-cavi-gauss-pi}
Let $\pi(\theta)\equiv \Nor(\theta; \mu,Q^{-1})$, with $\theta=(\theta_0,\dots,\theta_K)$ divided in $(K+1)$ blocks, and
\begin{equation}
\begin{aligned}\label{eq:fix-phi}
q^*
&= \argmax_{q \in \mathcal{Q}} V(q \Vert \pi)
,\quad
  V(q \Vert \pi)  = \mathbb{E}_{q(\theta)}\left[\log \pi(\theta)-\log q(\theta)\right]
,\\
 \mathcal{Q} &= \left\{q\;:\; q\left(\theta\right)=q\left(\theta_{\mathcal{C}}|\theta_{\mathcal{U}}\right)  \left[\prod_{k \in \mathcal{U}} q(\theta_{k})\right ]  \right \}\,,
\end{aligned}
\end{equation}
where $(\mathcal{C},\mathcal{U})$ is a partition of $\{0,\dots,K\}$ with $\mathcal{U}\neq \emptyset$.
\end{assumption}

When $\mathcal{C}=\emptyset$, the variational family coincides with the standard   mean-field one, therefore our results  also apply  and, 
to the best of our knowledge, are novel also in that specific case.

\subsection{Uncertainty quantification fraction}
We define the following metric, which measures the accuracy of a variational approximation $q$ for uncertainty quantification by comparing the variance of linear functions under $q$ to those under $\pi$.
We focus on variance underestimation, due to the broad understanding that mean-field variational inference is quite successful in approximating the mode of the target but it underestimates the variance \citep{blei2017variational}. Indeed, when $\pi$ is Gaussian, the mean of $q^*$ coincides with the one of $\pi$ (see, e.g., Chapter 10 in \citealt{bishop}).

\begin{definition}
  \label{def:uqf}
Let $\pi$ and $q$ be probability distributions over $\theta\in\mathbb{R}^p$. We define the uncertainty quantification fraction  between $\pi$ and $q$ as
    \begin{equation}\label{eq:def_uqf}
      UQF(q \Vert \pi) = \inf_{v \in \mathbb{R}^p\backslash\{0\}}
      {\mathrm{var}_{q(\theta)}(\theta^T v) \over \mathrm{var}_{\pi(\theta)}(\theta^T v)}. 
  \end{equation}
\end{definition}
The following propositions provide useful characterizations of the proposed measure. 

\begin{proposition}
  \label{prop:uqf}  
Let the covariance matrices of $\pi$ and $q$, 
 $\mathrm{cov}_\pi(\theta)$ and $\mathrm{cov}_{q}(\theta)$, be invertible. Then
  \begin{equation*}
    UQF(q \Vert \pi) =
    \lambda_{min}(\mathrm{cov}_{q}(\theta)^{1/2} \mathrm{cov}_\pi(\theta)^{-1} \mathrm{cov}_{q}(\theta)^{1/2} ) =  1/ \lambda_{max}(\mathrm{cov}_\pi(\theta)\mathrm{cov}_{q}(\theta)^{-1})\,,
    \label{eq:uqf-eigen}
  \end{equation*}
  where $\lambda_{min}(\cdot), \lambda_{max}(\cdot)$ denote the minimum and the maximum eigenvalue. 
\end{proposition}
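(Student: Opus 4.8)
The plan is to reduce the infimum defining $UQF$ to a generalized eigenvalue problem and then identify it with the claimed symmetric eigenvalue expressions. Write $\Sigma_\pi = \mathrm{cov}_\pi(\theta)$ and $\Sigma_q = \mathrm{cov}_q(\theta)$, both symmetric positive definite by the invertibility assumption. First I would note that for any $v$ the variance of the linear functional $\theta^T v$ equals the corresponding quadratic form, so that $\mathrm{var}_{q(\theta)}(\theta^T v) = v^T \Sigma_q v$ and $\mathrm{var}_{\pi(\theta)}(\theta^T v) = v^T \Sigma_\pi v$. Hence
\[
UQF(q \Vert \pi) = \inf_{v \neq 0} \frac{v^T \Sigma_q v}{v^T \Sigma_\pi v},
\]
which is the generalized Rayleigh quotient of the matrix pencil $(\Sigma_q, \Sigma_\pi)$.

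Second, I would perform the change of variables $u = \Sigma_\pi^{1/2} v$, legitimate since $\Sigma_\pi \succ 0$, which turns the quotient into the ordinary Rayleigh quotient $u^T (\Sigma_\pi^{-1/2} \Sigma_q \Sigma_\pi^{-1/2}) u / (u^T u)$. The matrix $M := \Sigma_\pi^{-1/2} \Sigma_q \Sigma_\pi^{-1/2}$ is symmetric positive definite, so by the Rayleigh--Courant--Fischer characterization the infimum over $u \neq 0$ equals exactly $\lambda_{min}(M)$ and is attained at the corresponding eigenvector. This already establishes that $UQF$ is a minimum eigenvalue of a symmetric matrix.

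Third, I would show that $M$ is cospectral with the matrices in the statement, using the standard fact that $AB$ and $BA$ share the same eigenvalues for square $A,B$. Writing $M = (\Sigma_\pi^{-1/2})(\Sigma_q \Sigma_\pi^{-1/2})$ shows it is cospectral with $(\Sigma_q \Sigma_\pi^{-1/2})(\Sigma_\pi^{-1/2}) = \Sigma_q \Sigma_\pi^{-1}$, and analogously with $\Sigma_q^{1/2} \Sigma_\pi^{-1} \Sigma_q^{1/2}$ (viewing the latter as the product $\Sigma_q^{1/2} \cdot \Sigma_\pi^{-1} \Sigma_q^{1/2}$). This yields the first claimed identity $UQF = \lambda_{min}(\Sigma_q^{1/2} \Sigma_\pi^{-1} \Sigma_q^{1/2})$.

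Finally, for the second equality I would invert: since $\Sigma_q^{1/2} \Sigma_\pi^{-1} \Sigma_q^{1/2}$ is symmetric positive definite with inverse $\Sigma_q^{-1/2} \Sigma_\pi \Sigma_q^{-1/2}$, the elementary relation $\lambda_{min}(N) = 1/\lambda_{max}(N^{-1})$ together with the cospectrality of $\Sigma_q^{-1/2} \Sigma_\pi \Sigma_q^{-1/2}$ with $\Sigma_\pi \Sigma_q^{-1}$ gives $UQF = 1/\lambda_{max}(\Sigma_\pi \Sigma_q^{-1})$. There is no genuinely hard step here; the only points requiring care are invoking invertibility so that the square roots, inverses, and boundedness of the quotient are all guaranteed, and keeping the cospectrality bookkeeping straight so that the non-symmetric products and the symmetric forms are all correctly identified with the same minimum eigenvalue.
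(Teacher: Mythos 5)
Your proof is correct and follows essentially the same route as the paper: reduce $UQF$ to the generalized Rayleigh quotient of the pencil $(\mathrm{cov}_q(\theta),\mathrm{cov}_\pi(\theta))$, whiten by a square-root change of variables, and use the cospectrality of $AB$ and $BA$ (a similarity transform in the paper's phrasing) together with $\lambda_{min}(N)=1/\lambda_{max}(N^{-1})$ to identify the stated eigenvalue expressions. The only difference is that you whiten with $\mathrm{cov}_\pi(\theta)^{1/2}$ while the paper substitutes $u=\mathrm{cov}_q(\theta)^{-1/2}v$, which is immaterial.
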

 \begin{proposition}
   \label{prop:uqf-bound-pf}
   Under Assumption \ref{ass:sub-cavi-gauss-pi} there is a unique   $q^*= \argmax_{q \in \mathcal{Q}} V(q\Vert \pi)$ and   $UQF(q^* \Vert \pi)  \leq 1$. 
 \end{proposition}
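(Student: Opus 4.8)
The plan is to reduce the whole problem to mean-field variational inference for the Gaussian \emph{marginal} $\pi(\theta_{\mathcal{U}})$, where everything can be made explicit. The starting point is the objective decomposition already obtained in the proof of Proposition~\ref{prop:PFCAVI}, specialised to the present setting in which there is no $\phi$. Writing $\pi(\theta_{\mathcal{C}}\mid\theta_{\mathcal{U}})$ and $\pi(\theta_{\mathcal{U}})$ for the exact conditional and marginal of the Gaussian target, this reads
\[
V(q\Vert\pi) = -\,\mathbb{E}_{q(\theta_{\mathcal{U}})}\big[\mathrm{KL}\{q(\theta_{\mathcal{C}}\mid\theta_{\mathcal{U}})\,\Vert\,\pi(\theta_{\mathcal{C}}\mid\theta_{\mathcal{U}})\}\big] - \mathrm{KL}\{q(\theta_{\mathcal{U}})\,\Vert\,\pi(\theta_{\mathcal{U}})\}.
\]
Both terms are nonpositive. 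For any fixed marginal $q(\theta_{\mathcal{U}})$ the first term is maximised, at value $0$, by the choice $q(\theta_{\mathcal{C}}\mid\theta_{\mathcal{U}})=\pi(\theta_{\mathcal{C}}\mid\theta_{\mathcal{U}})$, which is the exact linear-Gaussian conditional and does not depend on $q(\theta_{\mathcal{U}})$. Substituting it, the optimisation reduces to maximising the second term over the factorisation $\prod_{k\in\mathcal{U}}q(\theta_k)$, i.e.\ to ordinary mean-field variational inference for the Gaussian $\pi(\theta_{\mathcal{U}})$.

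For that reduced problem I would first argue that the optimal factors are Gaussian: holding the other factors fixed, the coordinate-optimal $q(\theta_k)$ is proportional to $\exp\{\mathbb{E}_{q(\theta_{\mathcal{U}\backslash k})}\log\pi(\theta_{\mathcal{U}})\}$, which is Gaussian because $\log\pi(\theta_{\mathcal{U}})$ is quadratic, so the search may be restricted to Gaussian factors without loss. Writing $\pi(\theta_{\mathcal{U}})=\Nor(\mu_{\mathcal{U}},P^{-1})$ with marginal precision $P$ (a Schur complement of $Q$), and $q(\theta_{\mathcal{U}})=\Nor(m,S)$ with $S=\mathrm{blockdiag}(S_k)_{k\in\mathcal{U}}$, the divergence equals, up to an additive constant, $\tfrac12[\mathrm{tr}(PS)+(m-\mu_{\mathcal{U}})^TP(m-\mu_{\mathcal{U}})-\log\det S]$. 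Since $P\succ 0$ this is minimised uniquely at $m=\mu_{\mathcal{U}}$, and, because it separates as $\sum_{k}[\mathrm{tr}(P_{kk}S_k)-\log\det S_k]$, each strictly convex piece $\min_{S_k\succ0}\mathrm{tr}(P_{kk}S_k)-\log\det S_k$ is minimised uniquely at $S_k=P_{kk}^{-1}$, where $P_{kk}$ is the $(k,k)$ diagonal block of $P$. This gives existence and uniqueness of $q^*$ and shows that the $q^*$-marginal of each $\theta_k$, $k\in\mathcal{U}$, is $\Nor(\mu_k,P_{kk}^{-1})$.

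For the bound $UQF(q^*\Vert\pi)\le 1$ I would apply Definition~\ref{def:uqf} directly and simply exhibit one direction with variance ratio at most $1$. Since $\mathcal{U}\neq\emptyset$, fix any $k\in\mathcal{U}$ and any $v$ supported on block $k$ with restriction $v_k\neq0$; then $\theta^Tv=\theta_k^Tv_k$ depends only on $\theta_k$, so $\mathrm{var}_{q^*}(\theta^Tv)=v_k^TP_{kk}^{-1}v_k$ while $\mathrm{var}_\pi(\theta^Tv)=v_k^T(P^{-1})_{kk}v_k$. The crucial matrix fact is the Schur-complement inequality $(P^{-1})_{kk}=\big(P_{kk}-P_{k,-k}P_{-k,-k}^{-1}P_{-k,k}\big)^{-1}\succeq P_{kk}^{-1}$, valid for any positive-definite $P$; it yields $\mathrm{var}_{q^*}(\theta^Tv)\le\mathrm{var}_\pi(\theta^Tv)$, so the ratio in \eqref{eq:def_uqf} is at most $1$ for this $v$, and taking the infimum over all $v$ gives $UQF(q^*\Vert\pi)\le 1$.

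The routine parts are the closed-form Gaussian Kullback--Leibler manipulations and the well-known strict convexity of $S\mapsto\mathrm{tr}(PS)-\log\det S$. The step carrying the real content is the inequality $(P^{-1})_{kk}\succeq P_{kk}^{-1}$, which is exactly the statement that the mean-field factor over-estimates the conditional precision and hence under-estimates variance in that block. The point requiring the most care is the justification that the optimal marginal factors may be taken Gaussian, which is what legitimises the explicit covariance characterisation used in both halves of the argument; this is where I would concentrate the rigour.
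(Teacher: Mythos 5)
Your proof is correct, and its first half follows the paper's own route exactly: reduce to mean-field approximation of the Gaussian marginal $\pi(\theta_{\mathcal{U}})$ after setting $q(\theta_{\mathcal{C}}\mid\theta_{\mathcal{U}})=\pi(\theta_{\mathcal{C}}\mid\theta_{\mathcal{U}})$, and identify $q^*(\theta_k)$ as $\Nor(\mu_k,P_{kk}^{-1})$ with $P$ the Schur complement $Q_{\mathcal{UU}}-Q_{\mathcal{UC}}Q_{\mathcal{CC}}^{-1}Q_{\mathcal{CU}}$ (the paper cites Bishop for uniqueness where you spell out the strict convexity of $\mathrm{tr}(P_{kk}S_k)-\log\det S_k$; your flagged concern about restricting to Gaussian factors is handled correctly by the fixed-point argument you sketch). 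Where you diverge is the final inequality: the paper goes through Proposition \ref{prop:uqf}, shows the matrix $\mathrm{cov}_\pi(\theta)^{-1}\mathrm{cov}_{q^*}(\theta)$ is block lower triangular with an identity $\mathcal{CC}$-block and a $\mathcal{UU}$-block of the form $P\widetilde{P}^{-1}$, whitens it, and bounds the minimum eigenvalue by testing against a vector supported on a single block (whose Rayleigh quotient is exactly $1$ because the whitened diagonal blocks are identities). You instead work straight from Definition \ref{def:uqf}, exhibiting a direction $v$ supported on one block $k\in\mathcal{U}$ and invoking the monotonicity of block-wise inversion, namely that $(P^{-1})_{kk}$ dominates $P_{kk}^{-1}$ in the positive semidefinite order. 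The two finishes are mathematically equivalent in content, but yours is more elementary and self-contained (it does not need the eigenvalue characterization of Proposition \ref{prop:uqf} or the block-triangularity computation), while the paper's version additionally delivers the full spectral picture of $\mathrm{cov}_\pi(\theta)^{-1}\mathrm{cov}_{q^*}(\theta)$ — the eigenvalue $1$ with multiplicity $|\mathcal{C}|$ plus the spectrum of $P\widetilde{P}^{-1}$ — which is then reused in the proofs of Theorems \ref{th:rs-pf-cavi-conv} and \ref{th:pf}, so your shortcut would not substitute for it downstream.
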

 Therefore, under Assumption \ref{ass:sub-cavi-gauss-pi},  larger values of $UQF(q^* \Vert \pi)$ indicate a better accuracy of the approximation.

\subsection{Coordinate ascent convergence rate and the uncertainty quantification fraction}\label{sec:duality}
We now relate $UQF(q^* \Vert \pi)$ to the convergence rate of the corresponding coordinate ascent algorithm. For mathematical convenience, we  consider a random scan version of coordinate ascent \citep{bhattacharya2023convergence}: At each iteration, the algorithm updates one term $q(\theta_{k})$ with $k\in\mathcal{U}$ chosen at random with uniform probabilities. Algorithm \ref{alg:rs-pf-cavi} in the Supplement provides a full description. The use of a random scan algorithm ensures that the convergence does not depend on the order in which the terms are updated, which makes theoretical results easier to interpret. The theorem is proved in the Supplement. 

\begin{theorem}
   \label{th:rs-pf-cavi-conv}
  Let $q_t$ be the density obtained after $t$ iterations of random scan coordinate ascent algorithm (i.e., Algorithm \ref{alg:rs-pf-cavi} in the Supplement). 
   Under Assumption \ref{ass:sub-cavi-gauss-pi}, for all starting $q_0$ and $t\geq 0$ we have
\begin{align}\label{eq:upper_bound}
\mathbb{E}[V(q^* \Vert \pi) - V(q_t \Vert \pi)] \leq \left(1 - {UQF(q^* \Vert \pi)\over |\mathcal{U}| } \right)^{|\mathcal{U}|t}  (V(q^* \Vert \pi) - V(q_0  \Vert \pi ))\,.
\end{align}
   On the other hand, there exist starting $q^\dagger_0$ for which
    \begin{align}\label{eq:lower_bound}
\mathbb{E}[V(q^* \Vert \pi) - V(q_t \Vert \pi)] \geq \left(1 - {UQF(q^* \Vert \pi)\over |\mathcal{U}|} \right)^{2|\mathcal{U}|t}  (V(q^*  \Vert \pi)-V(q^\dagger_0  \Vert \pi))\,,
   \end{align}
for all $t\geq 0$.
 The expectations in \eqref{eq:upper_bound} and \eqref{eq:lower_bound} are taken with respect to the random coordinates chosen at each iteration.  
 \end{theorem}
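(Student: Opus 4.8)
The plan is to reduce the partially factorized random-scan dynamics to mean-field coordinate ascent on a Gaussian marginal, and then to analyse that reduced dynamics as randomized block coordinate descent on a quadratic, exploiting that for Gaussians the means and covariances evolve independently.

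\emph{Reduction.} Specializing Proposition~\ref{prop:PFCAVI} to a Gaussian target with $\phi$ absent, the factor $q(\theta_{\mathcal C}\mid\theta_{\mathcal U})$ is set to the exact conditional $\pi(\theta_{\mathcal C}\mid\theta_{\mathcal U})$, which does not depend on $q$ and therefore stays fixed once updated. Using the chain rule $KL(q\Vert\pi)=KL(q(\theta_{\mathcal U})\Vert\pi(\theta_{\mathcal U}))+\mathbb E_{q(\theta_{\mathcal U})}[KL(q(\theta_{\mathcal C}\mid\theta_{\mathcal U})\Vert\pi(\theta_{\mathcal C}\mid\theta_{\mathcal U}))]$ together with $V=-KL$, the gap $V(q^*\Vert\pi)-V(q_t\Vert\pi)$ reduces to the mean-field gap for the marginal $\pi(\theta_{\mathcal U})=\Nor(\mu_{\mathcal U},M^{-1})$, where $M=Q_{\mathcal U\mathcal U}-Q_{\mathcal U\mathcal C}Q_{\mathcal C\mathcal C}^{-1}Q_{\mathcal C\mathcal U}$ is the Schur complement. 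For this mean-field problem the optimal block covariances are $M_{kk}^{-1}$, so with $D=\mathrm{blockdiag}(M_{kk})$ the marginal covariance of $q^*$ is $D^{-1}$. A short computation shows that the joint precision of $q^*$ differs from $Q$ only in its $\mathcal U\mathcal U$ block, by $D-M$; hence the generalized eigenvalues of the pencil $(\mathrm{cov}_{q^*}(\theta)^{-1},Q)$ are $1$ together with those of $(D,M)$, and Proposition~\ref{prop:uqf} gives the key identity $UQF(q^*\Vert\pi)=\lambda_{\min}(D^{-1}M)$. Finally, since the mean and covariance blocks enter $KL(q(\theta_{\mathcal U})\Vert\pi(\theta_{\mathcal U}))$ additively, the gap splits as a nonnegative covariance part plus a mean part $\tfrac12 u^{\T}Mu$ with $u=m-\mu_{\mathcal U}$, and each random update optimizes the two parts independently (the optimal $q(\theta_k)$ has mean depending only on the other means and covariance equal to the fixed floor $M_{kk}^{-1}$).

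\emph{Upper bound.} Each block's covariance contribution is driven to its floor the first time that block is selected and stays there, so the expected covariance part after $s$ uniform-with-replacement updates is exactly $(1-1/|\mathcal U|)^s$ times its initial value. For the mean part, exact minimization over block $k$ decreases $f(u):=\tfrac12 u^{\T}Mu$ by $\tfrac12 r_k^{\T}M_{kk}^{-1}r_k$ with $r=Mu$; averaging uniformly over $k\in\mathcal U$ gives expected decrease $\tfrac{1}{2|\mathcal U|}r^{\T}D^{-1}r$, and dividing by $f=\tfrac12 r^{\T}M^{-1}r$ and minimizing the resulting Rayleigh quotient over $r$ yields the worst-case one-update contraction
\[
1-\frac{\lambda_{\min}(D^{-1}M)}{|\mathcal U|}=1-\frac{UQF(q^*\Vert\pi)}{|\mathcal U|}.
\]
Because $UQF(q^*\Vert\pi)\le 1$ by Proposition~\ref{prop:uqf-bound-pf}, the covariance part contracts at least as fast as the mean part, so summing the two gives total-gap contraction by $1-UQF(q^*\Vert\pi)/|\mathcal U|$ per update; raising to the $|\mathcal U|t$ coordinate updates performed in $t$ iterations yields \eqref{eq:upper_bound}.

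\emph{Lower bound.} The conditional expectation of the mean offset evolves linearly, $\mathbb E[u_{t+1}\mid u_t]=\bar T u_t$ with $\bar T=I-|\mathcal U|^{-1}D^{-1}M$, whose largest eigenvalue is $1-UQF(q^*\Vert\pi)/|\mathcal U|$, attained along the eigenvector $w$ of $D^{-1}M$ associated with $\lambda_{\min}$. Initializing $q_0^{\dagger}$ with the floor block covariances (zero covariance gap) and $u_0=w$ gives $V(q^*\Vert\pi)-V(q_0^{\dagger}\Vert\pi)=f(w)$, and by convexity of $u\mapsto\tfrac12 u^{\T}Mu$ together with Jensen's inequality,
\[
\mathbb E[f(u_t)]\ge f(\mathbb E[u_t])=f(\bar T^{t}w)=\left(1-\frac{UQF(q^*\Vert\pi)}{|\mathcal U|}\right)^{2t} f(w).
\]
Since the total gap is at least its mean part $f(u_t)$, raising to the $|\mathcal U|t$ updates in $t$ iterations gives \eqref{eq:lower_bound}; the squared rate, i.e.\ the exponent $2|\mathcal U|t$, is precisely the price of bounding below through the first moment $\mathbb E[u_t]$.

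I expect the main obstacle to be the spectral reduction of the first part: establishing $UQF(q^*\Vert\pi)=\lambda_{\min}(D^{-1}M)$ through the Schur-complement bookkeeping, and confirming that the covariance component of the objective can never control the rate. Once these are in place, the upper bound is a standard Rayleigh-quotient contraction for randomized coordinate descent, and the only genuine slack in the argument is the Jensen step, which accounts for the factor of two in the lower-bound exponent.
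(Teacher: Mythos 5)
Your proposal is correct and follows essentially the same route as the paper's proof: reduce to mean-field coordinate ascent on the Gaussian marginal $\pi(\theta_{\mathcal U})$, identify $UQF(q^*\Vert\pi)$ with $\lambda_{\min}(D^{-1}M)$ via the Schur complement, derive the upper bound from the expected one-step decrease of the quadratic $\tfrac12 u^{\T} M u$ and a Rayleigh-quotient bound, and the lower bound from the linear evolution of $\mathbb{E}[u_t]$ plus Jensen's inequality. The only difference is that you track the covariance component of the gap explicitly for arbitrary $q_0$ (showing it contracts at rate $1-1/|\mathcal U|$, which is dominated by $1-UQF(q^*\Vert\pi)/|\mathcal U|$ since $UQF(q^*\Vert\pi)\le 1$), whereas the paper dispenses with it by asserting that the block precisions reach their limits after one update; your version is, if anything, slightly more careful on that point.
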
 
Theorem \ref{th:rs-pf-cavi-conv} provides a direct connection between the exponential rate of convergence of random scan coordinate ascent and $UQF(q^* \Vert \pi)$. 
In particular, \eqref{eq:upper_bound} implies 
$$
\mathbb{E}[V(q^* \Vert \pi) - V(\pi \Vert q_t)]
\leq \exp\left(-UQF(q^* \Vert \pi)t\right)V(\pi \Vert q_0)\,,
$$
meaning that $q_t$ converges to $q^*$ in target values with exponential speed of $1/UQF(q^* \Vert \pi)$. In other words, the number of  iterations needed to converge scales as $1/UQF(q^* \Vert \pi)$. 
On the other hand, \eqref{eq:lower_bound} shows that the bound is tight modulo small constants. For example, \eqref{eq:lower_bound} implies that there exist starting $q^\dagger_0$ for which
\begin{align*}
\mathbb{E}[V(q^* \Vert \pi) - V(\pi \Vert q_t)] \geq
\exp\left(-2.8\times UQF(q^* \Vert \pi)t\right)\,.
\end{align*}
Therefore, under Assumption \ref{ass:sub-cavi-gauss-pi}, there is a direct and tight connection between the rate of convergence of coordinate ascent and the quality of the variational approximation. When the variational approximation is poor in terms of uncertainty quantification, the corresponding  algorithm is also slow to converge,  and vice versa. 
This is an interesting insight into coordinate ascent variational inference and shows that  the algorithms either slowly return a poor approximation or quickly recover a good one. 
This formalizes empirical observations noted in previous research (see, e.g., \cite{ncp-vi}). It also shows that studying $UQF(q^* \Vert \pi)$ is sufficient to characterize both approximation accuracy and algorithmic convergence speed, as we do in Section \ref{sec:theory-GLMM}. Note also that \cite{ncp-vi} discuss the connection between the convergence of coordinate ascent variational inference and that of the Gibbs sampler, and uses this to import the reparameterisation framework of \cite{ncp} within variational inference to accelerate convergence.

\section{Partially factorized variational inference for mixed models}
\label{sec:methods}

\subsection{Coordinate ascent updates}
\label{sec:cavi-glmm}

We work out the details of the generic formulae in \eqref{eq:cavi_phi_general}-\eqref{eq:cavi_theta_U_general} for the specific setting of mixed models with Gaussian and binomial likelihood. We do so for the partially factorized variational family since this includes as special cases the fully factorized and the unfactorized, that can also be derived from the results below.
We first write out explicitly the distribution $\pi(\theta)$ defined in \eqref{eq:tilde_pi_def}, as this plays a key role in both the convergence theory of Section \ref{sec:theory-GLMM} and the numerical experiments. 
In the following results we write $\mathrm{diag}[(v_j)_{j=1:d}]$ to denote a diagonal matrix with elements $v_j$ and $\mathrm{diag}[(U_\ell)_{\ell=1:L}]$ to denote a block diagonal matrix with diagonal blocks $U_\ell$. 
\begin{proposition}\label{prop:pi}
For model \eqref{eq:glmm}, likelihood as in \eqref{eq:gauss} or \eqref{eq:binom} and fixed $q(\phi)$, the distribution $\pi(\theta)$ in \eqref{eq:tilde_pi_def} is multivariate Gaussian with
\begin{align*}
\pi(\theta_{\mathcal{C}} | \theta_{\mathcal{U}}) &\equiv \Nor\left((W_{\mathcal{C}}^T W_{\mathcal{C}} + P_{\mathcal{C}})^{-1} W_{\mathcal{C}}^T (\nu-W_{\mathcal{U}} \theta_{\mathcal{U}}),(W_{\mathcal{C}}^T W_{\mathcal{C}} + P_{\mathcal{C}})^{-1}\right)  \\
  \pi(\theta_{\mathcal{U}}) & \equiv \Nor\left((P_{\mathcal{U}}+W_{\mathcal{U}}^T M_{\mathcal{C}}W_{\mathcal{U}})^{-1}W_{\mathcal{U}}^T M_{\mathcal{C}}\nu, (P_{\mathcal{U}}+W_{\mathcal{U}}^T M_{\mathcal{C}}W_{\mathcal{U}})^{-1}\right).
\end{align*}
Here $\nu=\left(\nu_1,\dots,\nu_n\right)\in\mathbb{R}^n$,
$D=\mathrm{diag}\left[\left(D_{ii}\right)_{i=1:n}\right]\in\mathbb{R}^{n\times n}$,  
$W_{\mathcal{C}}= \left[(W_{k})_{k \in \mathcal{C}}\right]\in\mathbb{R}^{n\times \sum_{k \in \mathcal{C}} D_kG_k}$, 
$P_{\mathcal{C}} = \mathrm{diag}[ (P_{k})_{k \in \mathcal{C}}]\in\mathbb{R}^{\sum_{k \in \mathcal{C}}D_kG_k \times \sum_{k \in \mathcal{C}}D_kG_k}$ and analogously $W_{\mathcal{U}}$ and $P_{\mathcal{U}}$, with
\begin{align*}
M_{\mathcal{C}} & = I - W_{\mathcal{C}} (W_{\mathcal{C}}^T W_{\mathcal{C}} + P_{\mathcal{C}})^{-1} W_{\mathcal{C}}^T\in\mathbb{R}^{n\times n},\\
W_k &= D Z_k\in\mathbb{R}^{n\times D_kG_k},\,
P_k=I_{G_k} \otimes T_k\in\mathbb{R}^{D_kG_k\times D_kG_k}\quad\hbox{for }k>0\,,\\
W_0 &= D X\in\mathbb{R}^{n\times D_0},\,P_0 =O\in\mathbb{R}^{D_0\times D_0}\,,
\end{align*}
where $O$ denotes a matrix of zeros. 
For the Gaussian likelihood as in \eqref{eq:gauss}
\begin{align}\label{eq:D_gauss}
\nu_i &= y_i \left(\mathbb{E}_{q(\sigma^2)}[1/\sigma^2]\right)^{1/2},\;
D_{ii}^2= \mathbb{E}_{q(\sigma^2)}(1/\sigma^2),\;
T_{k}= \mathbb{E}_{q(\sigma^2)}\left[1/\sigma^2\right]\mathbb{E}_{q(\Sigma_k)}[\Sigma_k^{-1}],
\end{align}
while in the binomial likelihood as in \eqref{eq:binom}
\begin{align}\label{eq:D_binom}
\nu_i &= \left(y_i - n_i/2\right)/\left(\mathbb{E}_{q(\omega_i)}[\omega_i]\right)^{1/2},\;
D_{ii}^2=  \mathbb{E}_{q(\omega_i)}[\omega_i],\;
T_{k}= \mathbb{E}_{q(\Sigma_k)}[\Sigma_k^{-1}].
\end{align}
Above we use the convention $M_\emptyset = I\in\mathbb{R}^{n\times n}$  and 
 $\otimes$ denotes the Kronecker product.
\end{proposition}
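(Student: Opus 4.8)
The plan is to show that, for fixed $q(\phi)$, the exponent $\mathbb{E}_{q(\phi)}[\log p(\theta,\phi,y)]$ appearing in the definition \eqref{eq:tilde_pi_def} of $\pi(\theta)$ is, up to an additive constant independent of $\theta$, a quadratic form in $\theta$. This identifies $\pi(\theta)$ as Gaussian and pins down its precision and mean, after which the stated conditional and marginal follow by standard Gaussian algebra. Throughout, I would exploit that holding $q(\phi)$ fixed turns every $\mathbb{E}_{q(\phi)}$ into a deterministic constant, so the only $\theta$-dependence comes through the likelihood and prior quadratics.

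First I would compute the likelihood contribution for each case. For the Gaussian likelihood \eqref{eq:gauss}, $\log p(y_i\mid\eta_i)$ contributes $-\tfrac12\mathbb{E}_{q(\sigma^2)}[1/\sigma^2](y_i-\eta_i)^2$ after the expectation; writing $D_{ii}^2=\mathbb{E}_{q(\sigma^2)}[1/\sigma^2]$ and $\nu_i=D_{ii}y_i$ this is $-\tfrac12(\nu_i-D_{ii}\eta_i)^2$, matching \eqref{eq:D_gauss}. For the binomial case, the augmented log-likelihood in \eqref{eq:binom} contributes $(y_i-n_i/2)\eta_i-\mathbb{E}_{q(\omega_i)}[\omega_i]\eta_i^2/2$, and completing the square in $\eta_i$ with $D_{ii}^2=\mathbb{E}_{q(\omega_i)}[\omega_i]$ and $\nu_i=(y_i-n_i/2)/D_{ii}$ again gives $-\tfrac12(\nu_i-D_{ii}\eta_i)^2$ up to a $\theta$-free constant, matching \eqref{eq:D_binom}; the relevant point here is that $\omega_i$ enters linearly, so only $\mathbb{E}_{q(\omega_i)}[\omega_i]$ survives regardless of the joint structure of $q(\phi)$. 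Since $\eta_i=x_i^T\beta+\sum_k z_{i,k}^T\alpha_k$, stacking over $i$ gives $D_{ii}\eta_i=(W\theta)_i$ with $W=[W_0,\dots,W_K]$, $W_0=DX$, $W_k=DZ_k$, so the likelihood part equals $-\tfrac12\|\nu-W\theta\|^2$.

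Next I would handle the prior. The flat prior on $\beta$ contributes nothing, giving $P_0=O$, while each $\alpha_{k,g}\sim\Nor(0,\gamma\Sigma_k)$ contributes $-\tfrac12\alpha_{k,g}^T\mathbb{E}_{q(\phi)}[(\gamma\Sigma_k)^{-1}]\alpha_{k,g}$; with $\gamma=\sigma^2$ in the Gaussian case and $\gamma=1$ in the binomial case this produces exactly $T_k$ as in \eqref{eq:D_gauss}--\eqref{eq:D_binom}, and summing over $g$ yields $-\tfrac12\alpha_k^TP_k\alpha_k$ with $P_k=I_{G_k}\otimes T_k$. Collecting terms, $\mathbb{E}_{q(\phi)}[\log p(\theta,\phi,y)]=-\tfrac12(\nu-W\theta)^T(\nu-W\theta)-\tfrac12\theta^TP\theta+\text{const}$ with $P=\mathrm{diag}[(P_k)]$, so $\pi(\theta)$ is Gaussian with precision $Q=W^TW+P$ and mean $Q^{-1}W^T\nu$.

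Finally I would read off the conditional and marginal for the partition $(\mathcal{C},\mathcal{U})$. Treating $\theta_{\mathcal{U}}$ as fixed and completing the square in $\theta_{\mathcal{C}}$, with $\tilde\nu=\nu-W_{\mathcal{U}}\theta_{\mathcal{U}}$, gives $\pi(\theta_{\mathcal{C}}\mid\theta_{\mathcal{U}})=\Nor((W_{\mathcal{C}}^TW_{\mathcal{C}}+P_{\mathcal{C}})^{-1}W_{\mathcal{C}}^T\tilde\nu,\,(W_{\mathcal{C}}^TW_{\mathcal{C}}+P_{\mathcal{C}})^{-1})$ as claimed. For the marginal, because the coefficient $W_{\mathcal{C}}^TW_{\mathcal{C}}+P_{\mathcal{C}}$ of the quadratic-in-$\theta_{\mathcal{C}}$ term does not depend on $\theta_{\mathcal{U}}$, integrating out $\theta_{\mathcal{C}}$ is equivalent to profiling: substituting the minimizing $\hat\theta_{\mathcal{C}}$, the residual quadratic becomes $\tilde\nu^TM_{\mathcal{C}}\tilde\nu$ with $M_{\mathcal{C}}=I-W_{\mathcal{C}}(W_{\mathcal{C}}^TW_{\mathcal{C}}+P_{\mathcal{C}})^{-1}W_{\mathcal{C}}^T$, and expanding $\tilde\nu=\nu-W_{\mathcal{U}}\theta_{\mathcal{U}}$ and adding $\theta_{\mathcal{U}}^TP_{\mathcal{U}}\theta_{\mathcal{U}}$ yields a Gaussian in $\theta_{\mathcal{U}}$ with precision $P_{\mathcal{U}}+W_{\mathcal{U}}^TM_{\mathcal{C}}W_{\mathcal{U}}$ and mean $(P_{\mathcal{U}}+W_{\mathcal{U}}^TM_{\mathcal{C}}W_{\mathcal{U}})^{-1}W_{\mathcal{U}}^TM_{\mathcal{C}}\nu$, exactly the stated form. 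This is the Schur complement of $Q$, and the only genuine subtlety is recognizing that the profiling produces the regularized projection $M_{\mathcal{C}}$; with the conventions $P_0=O$ and $M_\emptyset=I$ the boundary cases are immediate, and everything else is bookkeeping, so I expect no substantive obstacle beyond keeping the block structure and the two likelihood parametrizations straight.
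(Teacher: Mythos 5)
Your proof is correct. The paper itself omits the proof of Proposition \ref{prop:pi}, describing it only as ``tedious but basic calculations,'' and your argument is precisely the calculation being alluded to: reduce $\mathbb{E}_{q(\phi)}[\log p(\theta,\phi,y)]$ to the quadratic $-\tfrac12\|\nu-W\theta\|^2-\tfrac12\theta^T P\theta$, then read off the conditional and the Schur-complement marginal, with the profiling step correctly producing $M_{\mathcal{C}}$. The only point worth flagging is that your identification of $T_k$ in the Gaussian case as $\mathbb{E}_{q(\sigma^2)}[1/\sigma^2]\,\mathbb{E}_{q(\Sigma_k)}[\Sigma_k^{-1}]$ silently uses the factorization $q(\phi)=q(\sigma^2)\prod_k q(\Sigma_k)$ to split the expectation of $(\sigma^2\Sigma_k)^{-1}$ into a product; the paper does impose this assumption (in Section \ref{sec:cavi-glmm}), but it should be cited explicitly since without it $\mathbb{E}_{q(\phi)}[(\sigma^2\Sigma_k)^{-1}]$ need not factor.
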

The proof of Proposition \ref{prop:pi} is based on tedious but basic calculations and we omit it. 
As discussed in Proposition \ref{prop:PFCAVI} and its proof, the update of $q(\theta_{\mathcal{C}}\mid \theta_{\mathcal{U}})$ equals $\pi(\theta_{\mathcal{C}}\mid \theta_{\mathcal{U}})$, and the latter is obtained in Proposition \ref{prop:pi}. From \eqref{eq:cavi_theta_U_general} and  Proposition \ref{prop:pi}, the update of $q(\theta_k)$ for $k\in \mathcal{U}$ has the form $\Nor(\mu_k,\Lambda_k)$, where: 
\begin{align}
\label{eq:lambdaMultiply}\mu_k & = \Lambda_k W_k^TM_{\mathcal{C}} \left[\nu - \sum_{\ell \in \mathcal{U}, \ell\neq k} W_\ell \mu_\ell\right] \\
\label{eq:lambdaWoodbury} \Lambda_k & = \Lambda^\emptyset_k + \Lambda^\emptyset_k W_k^TW_{\mathcal{C}}
     [P_{\mathcal{C}} +W_{\mathcal{C}}^T(I-W_k\Lambda_k^\emptyset W_k^T)W_{\mathcal{C}}]^{-1} W_{\mathcal{C}}^TW_k \Lambda_k^\emptyset \,,\\
     \Lambda_k^\emptyset & = (W_k^T W_k + P_k)^{-1}\,.
\end{align}
The intermediate quantity $\Lambda_k^\emptyset$ above corresponds to the variance in the update of $q(\theta_k)$ when $\mathcal{C}=\emptyset$, i.e., for the fully factorized variational family, hence the notation. The computation of ELBO also requires:
\[
\log |\Lambda_k| 
=
 \log |\Lambda_k^\emptyset| +
 \log |W_{\mathcal{C}}^TW_{\mathcal{C}} 
 + P_{\mathcal{C}}| - \log|P_{\mathcal{C}} +W_{\mathcal{C}}^T(I-W_k\Lambda_k^\emptyset W_k^T)W_{\mathcal{C}}|. 
\]
The update for $q(\phi)$ is rather standard. In the Gaussian case, we further assume that $q(\Sigma)$ and $q(\sigma^2)$ are independent, implying that $q(\phi) = q(\sigma^2)\prod_{k=1}^K q(\Sigma_k)$. In the binomial case, the model implies that $q(\phi)$ factorizes into $\prod_{i=1}^N q(\omega_i)\prod_{k=1}^K q(\Sigma_k)$ without additional assumptions. 
The coordinate-wise updates are as follows: $q(\Sigma_k)$ has the form of $\mathrm{IW}(a_k, \Phi_k)$, with
\begin{equation*}
		a_k = a^0_k + G_k; \quad \Phi_k = \begin{cases}  \Phi^0_k + \sum_{g=1}^{G_k} \mathbb{E}_{q(\sigma^2)}[1/\sigma^2]~\mathbb{E}_{q(\alpha_{k,g})}\left[\alpha_{k,g} \alpha_{k,g}^T\right] & \textrm{Gaussian model} \\ \Phi^0_k + \sum_{g=1}^{G_k} \mathbb{E}_{q(\alpha_{k,g})}\left[\alpha_{k,g} \alpha_{k,g}^T\right] & \textrm{binomial model}; \end{cases}
	\end{equation*}
while $q(\sigma^2)$ is an $\mathrm{InverseGamma}\left(a_{\sigma^2}, b_{\sigma^2}\right)$ with
	\begin{align*}
	a_{\sigma^2} & = {1\over 2} \left (n + \sum_{k=1}^K D_k G_k\right);\\ b_{\sigma^2} & = {1\over 2} \sum_{i=1}^N \left\{ \mathbb{E}_{q(\theta)}[(y_i -\eta_i)^2] + \sum_{k=1}^K \sum_{g=1}^{G_k} \mathbb{E}_{q(\alpha_{k,g})q(\Sigma_k)}\left[\mathrm{tr}(\Sigma^{-1}_k \alpha_{k,g}\alpha_{k,g}^T)\right]\right\}; 
	\end{align*}
and $q(\omega_i)$ is $\mathrm{PG}(b_i, c_i)$ where $b_i = n_i$ and
$c_i = (E_{q(\theta)}[\eta_i^2])^{1/2}$. The computation of $E_{q(\theta)}[\eta_i^2]$ requires the covariance between random effects in $\mathcal{U}$ and $\mathcal{C}$; this is already obtained when updating $q(\theta_{\mathcal{C}} \mid \theta_{\mathcal{U}})$. The expectations needed to update $\nu$, $D_{ii}^2$ and $T_k$ can be found noting that $\mathbb{E}_{q(\sigma^2)}[1/\sigma^2] = a_{\sigma^2}/b_{\sigma^2}$, $\mathbb{E}_{q(\Sigma_k)}[\Sigma_k^{-1}] = a_k \Phi_k^{-1}$, and $\mathbb{E}_{q(\omega_i)}[\omega_i] = \frac{b_i}{2 c_i} \tanh(c_i/2)$ \citep{polson2013polyagamma}.

\subsection{Computational cost per iteration}\label{sec:comp_cost}

Each iteration of the algorithm requires the Cholesky factor of the  matrix $W_{\mathcal{C}}^T W_{\mathcal{C}} + P_{\mathcal{C}}$, which changes at each iteration and has dimensions $(\sum_{k \in \mathcal{C}} G_kD_k )\times (\sum_{k \in \mathcal{C}} G_kD_k )$.
In the worst case, computing it requires a $\mathcal{O}(\left[\sum_{k \in \mathcal{C}} G_kD_k\right]^3)$ computational cost per iteration. 
This cubic cost can be reduced to a linear one for sparse matrices with specific patterns (such as banded matrices, trees, etc), which arise from GLMMs with $K=1$ or purely nested designs with no crossed factors (also called multilevel models, see e.g., \citealt{huang2001multiscale,zhang2008fast,menictas2021streamlined} and references therein).
However, for general designs with crossed effects, the sparsity pattern of the matrix $W_{\mathcal{C}}^T W_{\mathcal{C}} + P_{\mathcal{C}}$ is such that the corresponding Cholesky factor is dense, meaning that sparse Cholesky factorization techniques are not as effective as with, e.g., trees or banded matrices and in general do not reduce the cubic cost (see Section 4 of \citealt{Pandolfi2024} for details, as well as references cited in Section \ref{sec:lit} for additional discussion).

As a consequence, for general designs with potentially crossed factors, the computational cost of coordinate ascent for the unfactorized family, which corresponds to $\mathcal{C} = \{0, \dots, K\}$, scales super-linearly with the number of parameters $p$, see Figure \ref{fig:sim_times} in Section~\ref{sec:numerics} for an illustration. Iterative linear solvers, such as conjugate gradient, could be used to speed up some of the required computations. On the other hand, each iteration of the algorithm also requires log-determinants of $|C| \times |C|$ matrices which are harder to compute. Therefore, we prefer partial factorizations for which  factors with large $G_k$ are not in $\mathcal{C}$.

A naive implementation of coordinate ascent could incur super-linear cost in $\sum_{k \in \mathcal{U}} G_kD_k$. 
In particular, the matrix $\Lambda_k$ is high-dimensional ($G_k D_k \times G_k D_k$) and dense. However, its explicit form can be computed without a high-dimensional matrix inverse by using the Woodbury lemma as $\Lambda_k^{-1}$ is a diagonal matrix plus a $\sum_{k \in \mathcal{C}} G_kD_k$-rank matrix; see (\ref{eq:lambdaWoodbury}); and forming the full $\Lambda_k$ is never explicitly required to run the algorithm. Rather, one needs to compute $\Lambda_k a$ where $a$ is a vector such as that found in (\ref{eq:lambdaMultiply}),  and extract the $G_k$ diagonal blocks of size $D_k$ for use in other steps of the algorithm such as updating $q(\phi)$ and evaluating the ELBO. These blocks can also be extracted by specific (sparse) choices of $a$. Thus, coordinate ascent for a partially factorized family can be implemented without ever forming $\Lambda_k$ explicitly, avoiding potential computational bottlenecks related to those terms.  Our software implementation  (\href{https://github.com/mgoplerud/vglmer}{vglmer}) does precisely this. Therefore, coordinate ascent for a partially factorized family has a cost that scales linearly with $\sum_{k \in \mathcal{U}} G_kD_k$, which is why it can scale well to high-dimensional problems (see Figure \ref{fig:sim_times}).

\section{Scalability theory for mixed models and practical guidance}
\label{sec:theory-GLMM}

\subsection{Convergence and accuracy of  variational inference for mixed models}

Section \ref{sec:theory} shows how the uncertainty quantification fraction defined in \eqref{eq:def_uqf} simultaneously characterizes the approximation error of variational inference and the convergence rate of the associated coordinate ascent algorithm. 
In this section, we study how this behaves for mixed models. Our analysis is based on certain simplifications.  First, as in Section \ref{sec:theory}, we assume that 
the variational approximation of $\phi$, denoted as $q(\phi)$, is fixed and we focus on the sub-routine of the algorithm 
that switches off its update.
By the proof of Proposition \ref{prop:PFCAVI}, this is equivalent to focusing on the intermediate target distribution $\pi(\theta)$ defined in \eqref{eq:tilde_pi_def}. 
Second, we analyze mixed models with $D_k=1$ for $k=0,\dots,K$ (i.e., random-intercept models).
For clarity, we state the resulting model under consideration in a self-contained way below. Recall that $m_{i,k}\in\{0,1\}^{G_{k}}$ is the one-hot-encoded vector that indicates the level of the $k$-th factor to which observation $i$ belongs.

\begin{restrictions}
  \label{ass:glmm-simple}
  Consider the following special case of the model defined in  \eqref{eq:glmm}:  
  \begin{equation}\label{eq:glmm_K}
\begin{aligned}
 y_i \mid \eta_i & \sim p(y_i | \eta_i)\,,\quad  \eta_i  = \theta_0 + \sum_{k=1}^K m_{i,k}^T \theta_k, 
&i=1,\dots,n\\
 \theta_{k,g} &\sim \Nor \left(0, \gamma \Sigma_k\right)
 &g=1,\dots,G_k;\,k=1,\dots,K
\end{aligned}
  \end{equation}
  with $p(y_i|\eta_i)$ and $\gamma$ defined as in either \eqref{eq:gauss} or \eqref{eq:binom}, and a flat prior for $\theta_0$.
Assume $q(\phi)$ is fixed and denote by $\pi(\theta)$ the distribution defined in \eqref{eq:tilde_pi_def}.
Here $\phi=((\Sigma_k)_{k=1}^K, \sigma^2)\in(0,\infty)^{K+1}$ for  Gaussian likelihood in \eqref{eq:gauss}, and $\phi=((\Sigma_k)_{k=1}^K, (\omega_i)_{i=1}^n)\in(0,\infty)^{K+n}$ for binomial likelihood in \eqref{eq:binom}.
\end{restrictions}
We first obtain an upper bound for the $UQF(q^*_{FF} \Vert \pi) $ for fully factorized variational inference, which corresponds to partially factorized with $\mathcal{C} = \emptyset$. 
\begin{theorem}
  \label{th:ff}
Consider Restriction  \ref{ass:glmm-simple} and let $q^*_{FF}$ be the fully factorized variational approximation, i.e., the solution of \eqref{eq:fix-phi} when $\mathcal{C} = \emptyset$. Then
\begin{equation}\label{eq:upper_bound_ffvi}
UQF(q^*_{FF} \Vert \pi) \leq   1 - \max_{1\leq k \leq K} \left( n\bar{D}  \over G_k T_k + n\bar{D} \right)^{1/2}\,,
\end{equation}
where $\bar{D}=n^{-1}\sum_{i=1}^n D^2_{ii}$. Here $(D_{ii})_{i=1}^n$ and $(T_k)_{k=1}^K$ are defined as in \eqref{eq:D_gauss} and \eqref{eq:D_binom}.
\end{theorem}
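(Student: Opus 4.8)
The plan is to rewrite the quantity $UQF(q^*_{FF}\Vert\pi)$ in terms of the precision matrix of $\pi$ and then reduce the required bound to a two-block eigenvalue computation. First I would invoke Proposition~\ref{prop:pi} with $\mathcal{C}=\emptyset$ (so that $M_\emptyset=I$) to identify the precision matrix of $\pi$ as $Q=P+W^TW$, where $W=[W_0,\dots,W_K]$ and $P=\mathrm{diag}[(P_k)_k]$; under Restriction~\ref{ass:glmm-simple} one has $W_0=D\mathbf 1$, $W_k=DZ_k$ with $Z_k$ the one-hot membership matrix of factor $k$, $P_0=0$ and $P_k=T_kI_{G_k}$. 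For the fully factorized family the optimal covariance is block diagonal with blocks $\Lambda_k^\emptyset=(W_k^TW_k+P_k)^{-1}=Q_{kk}^{-1}$, cf.\ \eqref{eq:lambdaWoodbury} and the ensuing remark; equivalently $\mathrm{cov}_{q^*_{FF}}(\theta)^{-1}=Q_D$, the block-diagonal part of $Q$. Proposition~\ref{prop:uqf} then gives $UQF(q^*_{FF}\Vert\pi)=1/\lambda_{\max}(\mathrm{cov}_\pi(\theta)\,\mathrm{cov}_{q^*_{FF}}(\theta)^{-1})=\lambda_{\min}(Q_D^{-1/2}QQ_D^{-1/2})$.

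Next I would bound this smallest eigenvalue from above by restricting attention to the intercept block and a single factor block. By Cauchy eigenvalue interlacing, $\lambda_{\min}$ of the full normalized precision is at most $\lambda_{\min}$ of its principal submatrix indexed by the blocks $\{0,k\}$, which equals $\begin{pmatrix} I & A_k \\ A_k^T & I\end{pmatrix}$ with $A_k=Q_{00}^{-1/2}Q_{0k}Q_{kk}^{-1/2}$. Because the intercept block is one-dimensional, $A_k$ is a row vector, so this submatrix has eigenvalues $1$ (with multiplicity $G_k-1$) and $1\pm\|A_k\|_2$; hence $UQF(q^*_{FF}\Vert\pi)\leq 1-\|A_k\|_2$ for every $k$, and therefore $UQF(q^*_{FF}\Vert\pi)\leq 1-\max_k\|A_k\|_2$.

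It then remains to lower bound $\|A_k\|_2$. Writing $s_g=\sum_{i:m_{i,k,g}=1}D_{ii}^2$ for the weighted size of level $g$, one computes $Q_{00}=\sum_g s_g=n\bar D$, $Q_{0k}=(s_g)_g$ and $Q_{kk}=\mathrm{diag}(T_k+s_g)$, so that $\|A_k\|_2^2=(n\bar D)^{-1}\sum_g s_g^2/(T_k+s_g)$. The target inequality $\|A_k\|_2^2\geq n\bar D/(G_kT_k+n\bar D)$ follows from Cauchy--Schwarz applied to $\sum_g s_g=\sum_g \frac{s_g}{\sqrt{T_k+s_g}}\sqrt{T_k+s_g}$, which yields $(\sum_g s_g)^2\leq (\sum_g s_g^2/(T_k+s_g))(\sum_g(T_k+s_g))$ together with $\sum_g(T_k+s_g)=G_kT_k+n\bar D$. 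Taking the maximum over $k$ delivers \eqref{eq:upper_bound_ffvi}. The only mildly delicate point, and the one I expect to be the main obstacle, is the interlacing reduction to two blocks, namely recognizing that the worst-case direction defining $UQF$ can be sought among vectors supported on the intercept and a single factor and that this two-block problem has $\lambda_{\min}=1-\|A_k\|_2$; the remaining steps are a direct eigenvalue computation and one application of Cauchy--Schwarz.
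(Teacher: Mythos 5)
Your proof is correct and follows essentially the same route as the paper: both reduce $UQF(q^*_{FF}\Vert\pi)$ to the minimum eigenvalue of the block-normalized precision $\bar Q=Q_D^{-1/2}QQ_D^{-1/2}$ and then restrict attention to the intercept block together with a single factor block $k$. The only difference is in packaging: the paper's Lemma~\ref{lem:uqf-bound} tests $\bar Q$ against one explicit vector per block (with $g$-th entry $(T_k+n_g^{(k)})^{1/2}$ in block $k$, chosen so the resulting $2\times 2$ reduction yields the stated constant directly), whereas you compute the exact minimum eigenvalue $1-\|A_k\|_2$ of the two-block principal submatrix via interlacing and then lower-bound $\|A_k\|_2$ by Cauchy--Schwarz; the two computations are dual to one another and produce the same bound.
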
 
Theorem \ref{th:ff} points to convergence and accuracy problems for fully factorized variational inference. 
To see that, consider a high-dimensional regime where both $n$ and $\sum_{k=1}^KG_k$ grow. 
Then, assuming that $\bar{D}$ and $T_k$ are bounded away from $0$ and $\infty$, 
the upper bound in \eqref{eq:upper_bound_ffvi} goes to $0$ if and only if $n/G_k\to\infty$ for at least one $k$ and, in such cases, it does at  rate $\mathcal{O}((\min_k G_k)/n)$. 
Thus, in the so-called in-fill large data regime, where the $G_k$'s are fixed and $n$ increases (e.g., constant depth of interactions but increasing number of respondents), $UQF(q^*_{FF} \Vert \pi)$ decreases to $0$ at rate $\mathcal{O}(1/n)$. Additionally, in high-dimensional regimes when the size of the contingency table grows with $n$,  $UQF(q^*_{FF} \Vert \pi)$ usually decreases to 0.
Consider for example the case where $K=2$ and $n$ is a fraction of $G_1 \times G_2$ (say, by observing a completely random subset of the cells). Then if either $G_1$ or $G_2$ increases $UQF(q^*_{FF} \Vert \pi)$ decreases to 0. 
Overall, Theorem \ref{th:ff} suggests that fully factorized variational inference does not typically perform well in large scale mixed models and the numerics of Section \ref{sec:numerics} corroborate this.

With appropriate $ \emptyset \subset \mathcal{C} \subset \{0,1,\ldots,K\}$ we can achieve both accuracy and scalability. We analyse the uncertainty quantification fraction induced by partial factorization for design matrices that satisfy a balancedness condition.
This is formalized in Assumption \ref{ass:balance} in terms of the following weighted counts
  \begin{align}
    n_{g,h}^{(k,\ell)} &= \sum_{i=1}^n D^2_{i,i} m_{i,k,g} m_{i,\ell,h},\quad n_{g}^{(k)}=n_{g,g}^{(k,k)},
    \label{eq:counts}
  \end{align}
  where $k,\ell=1,2$, $g=1,\dots,G_k$ and $h=1,\dots,G_\ell$. 
\begin{assumption}[Balanced designs]  \label{ass:balance}
$K=2$ and 
  $n^{(k)}_g = n^{(k)}_{g'}$ for all $g,g'=1,\ldots,G_k$ and $k=1,\ldots, K$.
\end{assumption}
A balancedness requirement analogous to Assumption \ref{ass:balance} has been previously employed to study Markov chain Monte Carlo for similar models in \cite{papaspiliopoulos2020scalable,papaspiliopoulos2023scalable}.
In the Gaussian case, $n^{(k)}_g$ is equal to a constant,
$\mathbb{E}_{q(\sigma^2)}(1/\sigma^2)$, times the number of observations that involve level $g$ of factor $k$.
Thus, Assumption \ref{ass:balance} requires that each level within a factor receives the same number of observations.
In the binomial case, $n^{(k)}_g$ is equal to the sum of the elements $\mathbb{E}_{q(\omega_i)}[\omega_i]$ over the observations that involve level $g$ of factor $k$. In this case, requiring $n^{(k)}_g$ to be constant across $g$, as in Assumption \ref{ass:balance},  is less realistic. 
Extensions of Theorems \ref{th:pf} and  \ref{th:rg} below to more realistic cases with binomial likelihoods are left to future work.

\begin{theorem}
  \label{th:pf}
  Consider Restriction \ref{ass:glmm-simple} and 
let $q_{PF}^*$ be  
the solution of \eqref{eq:fix-phi} for $\mathcal{C} = \{0\}$, i.e., when $\theta_{\mathcal{C}}$ contains only the fixed effect $\theta_0$.
Then, under Assumption \ref{ass:balance}
\begin{equation}\label{eq:bound_K2}
UQF(q^*_{PF} \Vert \pi) =   1 - \left(\prod_{k=1}^2\left[n\bar{D}  \over G_k T_k + n\bar{D}  \right]^{1/2}\right)\lambda_{aux}^{1/2}\,, 
\end{equation}
with $\lambda_{aux}$ being the second largest modulus eigenvalue of the $G_1\times G_1$ stochastic matrix $S_{12}S_{21}$, where $S_{k\ell}$ for $k\neq \ell$ is a $G_k\times G_\ell$ stochastic matrix with $(g,h)$-th entry  equal to  $ n_{g,h}^{(k,\ell)}/n_{g}^{(k)}$.
\end{theorem}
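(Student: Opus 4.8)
The plan is to reduce $UQF(q_{PF}^*\Vert\pi)$ to a mean-field computation on the marginal $\pi(\theta_{\mathcal U})$ and then diagonalize everything explicitly using balancedness. First I would exploit that with $\mathcal C=\{0\}$ the optimizer satisfies $q^*(\theta_0\mid\theta_{\mathcal U})=\pi(\theta_0\mid\theta_{\mathcal U})$ exactly, by Proposition \ref{prop:PFCAVI}. Hence $\pi$ and $q^*$ share the same conditional covariance $\Sigma_{0\mid\mathcal U}$ and the same conditional-mean regression matrix $A$, so both covariances factor as $\mathrm{cov}=L\,\mathrm{diag}(\Sigma_{0\mid\mathcal U},\Sigma_{\mathcal U})\,L^T$ with the common unit block-triangular matrix $L=\left(\begin{smallmatrix}I&A\\0&I\end{smallmatrix}\right)$, differing only in the $\theta_{\mathcal U}$-block $\Sigma_{\mathcal U}$. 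Consequently $\mathrm{cov}_\pi(\theta)\,\mathrm{cov}_{q^*}(\theta)^{-1}$ is similar to $\mathrm{diag}(I,\,\Sigma^\pi_{\mathcal U}(\Sigma^{q^*}_{\mathcal U})^{-1})$, and by Proposition \ref{prop:uqf} together with $UQF\le1$ (Proposition \ref{prop:uqf-bound-pf}) this gives $UQF(q^*_{PF}\Vert\pi)=1/\lambda_{max}(\Sigma^\pi_{\mathcal U}(\Sigma^{q^*}_{\mathcal U})^{-1})$, i.e. the uncertainty quantification fraction of the fully factorized approximation of the Gaussian marginal $\pi(\theta_{\mathcal U})$ alone.

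By Proposition \ref{prop:pi} this marginal has precision $Q_{\mathcal U}=P_{\mathcal U}+W_{\mathcal U}^T M_{\mathcal C}W_{\mathcal U}$, and its mean-field optimizer has block-diagonal covariance equal to $D_Q^{-1}$, where $D_Q$ collects the diagonal blocks $Q_{11},Q_{22}$ of $Q_{\mathcal U}$. Writing $R=I-D_Q^{-1/2}Q_{\mathcal U}D_Q^{-1/2}$, a similarity argument gives $\lambda_{max}(\Sigma^\pi_{\mathcal U}(\Sigma^{q^*}_{\mathcal U})^{-1})=\lambda_{max}(Q_{\mathcal U}^{-1}D_Q)=1/(1-\lambda_{max}(R))$, so $UQF=1-\lambda_{max}(R)$. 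Since $K=2$, $R$ is block-antidiagonal with blocks $R_{12}=-Q_{11}^{-1/2}Q_{12}Q_{22}^{-1/2}$ and $R_{12}^T$, whence $\lambda_{max}(R)=\sigma_{max}(R_{12})$, the largest singular value.

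It then remains to compute the blocks explicitly. Using $M_{\mathcal C}=I-D\mathbf{1}\mathbf{1}^TD/(n\bar D)$ and the weighted counts in \eqref{eq:counts}, one obtains $Q_{kk}=T_kI+(n\bar D/G_k)P_k^\perp$ and $Q_{12}=N^{(1,2)}-(n\bar D/(G_1G_2))\mathbf{1}\mathbf{1}^T$, where $P_k^\perp$ centers around the constant vector and $N^{(1,2)}$ has entries $n^{(1,2)}_{g,h}$. Here Assumption \ref{ass:balance} is used \emph{decisively}: it makes the diagonal counts constant, so $Q_{kk}$ is a multiple of the identity on $\mathbf{1}^\perp$ and equals $T_k$ on $\mathbf{1}$, and it forces $Q_{12}\mathbf{1}_{G_2}=0$ with range orthogonal to $\mathbf{1}_{G_1}$. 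Decomposing $\mathbb{R}^{G_k}$ into constants and $\mathbf{1}^\perp$, the factors $Q_{kk}^{-1/2}$ act as the scalar $(T_k+n\bar D/G_k)^{-1/2}$ on $\mathbf{1}^\perp$ while $Q_{12}$ restricted there equals $N^{(1,2)}$; hence the nonzero singular values of $R_{12}$ are $[(T_1+n\bar D/G_1)(T_2+n\bar D/G_2)]^{-1/2}$ times those of $N^{(1,2)}$ on $\mathbf{1}^\perp$. Finally, balancedness gives $S_{12}S_{21}=(G_1G_2/(n\bar D)^2)\,N^{(1,2)}(N^{(1,2)})^T$, a symmetric positive semidefinite stochastic matrix with top eigenvalue $1$ on $\mathbf{1}_{G_1}$ and second eigenvalue $\lambda_{aux}$ on $\mathbf{1}_{G_1}^\perp$, so $\sigma_{max}(R_{12})^2=(n\bar D)^2\lambda_{aux}/[(G_1T_1+n\bar D)(G_2T_2+n\bar D)]$, which upon taking square roots and using $G_k(T_k+n\bar D/G_k)=G_kT_k+n\bar D$ yields the claimed formula.

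The main obstacle I anticipate is the first reduction, and within the final step correctly certifying that the maximizing singular direction lies in $\mathbf{1}^\perp$, so that the second eigenvalue $\lambda_{aux}$ appears rather than the Perron eigenvalue $1$. This hinges on the facts that $Q_{12}$ annihilates the constant vector and has range orthogonal to constants, which hold precisely because of Assumption \ref{ass:balance}; without balancedness neither does $Q_{kk}$ simplify to identity-plus-projection nor does $S_{12}S_{21}$ become symmetric, and the clean closed form would break down.
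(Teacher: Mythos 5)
Your proposal is correct and follows essentially the same route as the paper's proof: reduce $UQF$ to the mean-field problem on the Gaussian marginal $\pi(\theta_{\mathcal U})$, express it as $1$ minus the top singular value of the normalized off-diagonal block (the paper packages this as a lemma on $2\times2$ block matrices via the bipartite-adjacency spectrum), compute the blocks of $Q_{\mathcal{UU}}-Q_{\mathcal{U}0}Q_{00}^{-1}Q_{0\mathcal{U}}$ explicitly under balancedness, and identify the resulting spectral radius with $\lambda_{aux}$ of $S_{12}S_{21}$. Your decomposition into the constant direction and its orthogonal complement is just a slightly different organization of the paper's manipulation of the centred stochastic matrices, and all the key checks (that $Q_{12}$ annihilates constants and that the maximizing direction lies in $\mathbf{1}^\perp$) are correctly handled.
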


Given that $\lambda_{aux}$ is a modulus eigenvalue of a stochastic matrix, we have $\lambda_{aux}\in[0,1]$. Thus \eqref{eq:bound_K2} implies
\begin{equation}\label{eq:loose_bound}
UQF(q^*_{PF} \Vert \pi) \geq  1 - \left(\prod_{k=1}^2\left[n\bar{D} \over G_k T_k + n\bar{D} \right]^{1/2}\right).
\end{equation}
Combined with \eqref{eq:upper_bound_ffvi}, the inequality in \eqref{eq:loose_bound} implies that $UQF(q^*_{PF,\{0\}} \Vert \pi)$ is always larger than $UQF(q^*_{FF} \Vert \pi)$, meaning that partial factorization with $\mathcal{C} = \{0\}$ always performs better than $\mathcal{C} = \emptyset$ in this context. 
However, the lower bound in \eqref{eq:loose_bound}, which ignores the term $\lambda_{aux}$, is usually far from tight, since often $\lambda_{aux}\ll 1$.
In particular, assuming $\bar{D}$ and $T_k$ bounded away from $0$ and $\infty$ as before, the lower bound in \eqref{eq:loose_bound} goes to $0$ when $n/G_k\to\infty$ for both $k=1,2$.
One thus needs to consider the term $\lambda_{aux}$ more carefully to gain insight into the behaviour of partially factorized variational inference.
The value of $\lambda_{aux}$ depends on the  matrix $N_{1,2}=(n_{g,h}^{(1,2)})_{g,h}$, which we refer to as the design or co-occurrence
matrix between factor $1$ and $2$, 
and it can be interpreted as a measure of graph connectivity. 
Specifically, define $\mathcal{G}$ as a weighted graph with $G_1+G_2$ vertices, which are in one-to-one correspondence with the parameters in $(\theta_1,\theta_2)$ such that $\mathcal{G}$ is bipartite (i.e., there is no edge within elements of $\theta_{1}$ nor within elements of $\theta_{2}$) and the adjacency matrix between $\theta_1$ and $\theta_2 $ is $N_{1,2}$. In other words, there is an edge between ${\theta}_{1,g}$ and ${\theta}_{2,h}$ if and only if $n^{(1,2)}_{g,h}\neq 0$ and $n^{(1,2)}_{g,h}$ is the weight of such edge.
Then, $1-\lambda_{aux}^{1/2}$ coincides with the spectral gap of the (normalized) Laplacian of $\mathcal{G}$. 
Thus, the smaller $\lambda_{aux}$, the better the connectivity of $\mathcal{G}$.
By \eqref{eq:bound_K2} we have
\begin{equation}\label{eq:bound_lambda_aux}
UQF(q^*_{PF} \Vert \pi)\geq 1 - \lambda_{aux}^{1/2}\,, 
\end{equation}
meaning that good connectivity of $\mathcal{G}$ is sufficient for partially factorized variational inference with $\mathcal{C}=\{0\}$ to perform well.

Given the connection between $UQF(q^*_{PF} \Vert \pi)$ and $\lambda_{aux}$ in Theorem \ref{th:pf}, we can leverage known results from spectral random graph theory \citep{brito2022spectral} to provide a sharp analysis of $UQF(q^*_{PF} \Vert \pi)$ under the assumption of a random design.
Roughly speaking, we will assume that $N_{1,2}$ is sampled uniformly at random from the designs that satisfy Assumption \ref{ass:balance}, 
which corresponds to assuming that $\mathcal{G}$ coincides (after normalization) with a random bipartite biregular graph. 
More precisely, let $d_{1}$, $d_{2}$ and $n$ be non-negative integers such that $n$ is a multiple of both $d_{1}$ and $d_{2}$, and let $G_{1}= n/d_{1}$, $G_{2} = n/d_{2}$.
Define $\mathcal{D}(n, d_{1}, d_{2})$ as the collection of all $G_{1}\times G_{2}$ binary 
designs with $n$ observations satisfying the balancedness condition of Assumption \ref{ass:balance}, i.e.,
$$
\mathcal{D}(n, d_{1}, d_{2})=\{N_{1,2}\in\{0,\bar{D}\}^{G_1\times G_2}\,:\,n^{(k)}_g = \bar{D}d_k\hbox{ for all }k=1,2\hbox{ and }g=1,\ldots,G_k\}\,.
$$
Assumption \ref{ass:random_design} formalizes our random design assumption.
\begin{assumption}[Random designs]\label{ass:random_design}
$N_{1,2}\sim \mathrm{Unif} (\mathcal{D}(n, d_{1}, d_{2}))$.
 \end{assumption}

\begin{theorem}
  \label{th:rg}
  Consider Restriction \ref{ass:glmm-simple} and let $q^*_{PF}$ be the solution of \eqref{eq:fix-phi} for $\mathcal{C} = \{0\}$.
Then, under Assumption \ref{ass:random_design}, for every $\epsilon>0$ it holds that
\begin{equation}\label{eq:lower_random}
UQF(q^*_{PF} \Vert \pi)
\geq
1 - ((G_1/n)^{1/2}+(G_2/n)^{1/2}))^{1/2}-\epsilon
\end{equation}
almost surely as $n\to\infty$ (with fixed $d_1=n/G_1$ and $d_2=n/G_2$).
\end{theorem}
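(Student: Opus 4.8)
The plan is to combine the exact characterization of $UQF(q^*_{PF}\Vert\pi)$ from Theorem \ref{th:pf} with a spectral bound on $\lambda_{aux}$ imported from random graph theory. The natural starting point is the corollary \eqref{eq:bound_lambda_aux}, namely $UQF(q^*_{PF}\Vert\pi)\geq 1-\lambda_{aux}^{1/2}$, which holds under Assumption \ref{ass:balance}. First I would check that this inequality is available in the present setting: by construction every realization $N_{1,2}\in\mathcal{D}(n,d_1,d_2)$ satisfies $n_g^{(k)}=\bar D\,d_k$, constant in $g$, so Assumption \ref{ass:balance} holds deterministically for every draw under Assumption \ref{ass:random_design}, and Theorem \ref{th:pf} applies with probability one. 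The whole problem thus reduces to an almost-sure upper bound on $\lambda_{aux}$.

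Next I would make the graph-theoretic identification explicit. Because the design takes values in $\{0,\bar D\}^{G_1\times G_2}$ with exactly $d_k$ nonzero entries in each row/column associated with factor $k$, the stochastic matrices of Theorem \ref{th:pf} are simply $S_{12}=d_1^{-1}A$ and $S_{21}=d_2^{-1}A^T$, where $A$ is the unweighted $(d_1,d_2)$-biregular biadjacency matrix of $\mathcal{G}$. Hence $S_{12}S_{21}=(d_1d_2)^{-1}AA^T$, so $\lambda_{aux}$ equals $(d_1d_2)^{-1}$ times the second largest eigenvalue of $AA^T$; equivalently $\lambda_{aux}^{1/2}=\sigma_2(A)/\sqrt{d_1d_2}$, with $\sigma_2(A)$ the second largest singular value of $A$. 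This matches the normalized-biadjacency and spectral-gap interpretation recorded just before Theorem \ref{th:rg}.

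I would then invoke the spectral estimate for uniformly random biregular bipartite graphs. Under Assumption \ref{ass:random_design} the uniform law on $\mathcal{D}(n,d_1,d_2)$ coincides with that of a uniform simple $(d_1,d_2)$-biregular bipartite graph on $G_1+G_2$ vertices, so the Ramanujan-type bound of \cite{brito2022spectral} yields, for fixed $d_1,d_2$ and every $\delta>0$, that $\sigma_2(A)\leq \sqrt{d_1-1}+\sqrt{d_2-1}+\delta$ almost surely as $n\to\infty$; the almost-sure statement follows from the high-probability bound together with a Borel--Cantelli argument along the sequence in $n$. Dividing by $\sqrt{d_1d_2}$ and using $\sqrt{d_k-1}\leq\sqrt{d_k}$ together with $d_k=n/G_k$ gives $\lambda_{aux}^{1/2}\leq d_2^{-1/2}+d_1^{-1/2}+\delta' = (G_2/n)^{1/2}+(G_1/n)^{1/2}+\delta'$. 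Substituting into \eqref{eq:bound_lambda_aux} and absorbing $\delta'$ into $\epsilon$ produces the displayed lower bound.

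The only genuinely deep ingredient is the spectral-gap estimate for random biregular bipartite graphs, which I would import wholesale from \cite{brito2022spectral}; within this proof the work is confined to normalization bookkeeping (the identities $S_{12}S_{21}=(d_1d_2)^{-1}AA^T$ and $\lambda_{aux}^{1/2}=\sigma_2(A)/\sqrt{d_1d_2}$), the observation that balancedness holds for every realization, and the passage from a high-probability to an almost-sure statement. I expect the subtlest point to be matching the exact normalization and singular-value convention of the cited spectral theorem to the stochastic matrices $S_{k\ell}$ of Theorem \ref{th:pf}, so that the Ramanujan threshold translates cleanly into the $(G_k/n)^{1/2}$ terms appearing in \eqref{eq:lower_random}.
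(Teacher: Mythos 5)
Your proposal is correct and follows essentially the same route as the paper's proof: reduce to the bound $UQF(q^*_{PF}\Vert\pi)\geq 1-\lambda_{aux}^{1/2}$ from Theorem \ref{th:pf} (noting that every design in $\mathcal{D}(n,d_1,d_2)$ satisfies Assumption \ref{ass:balance}), identify $S_{12}S_{21}$ with $(d_1d_2)^{-1}AA^{\T}$ for the biadjacency matrix of a uniform bipartite biregular graph, and invoke Theorem 3.2 of \cite{brito2022spectral}. One small remark: your careful normalization gives $\lambda_{aux}^{1/2}=\sigma_2(A)/(d_1d_2)^{1/2}\leq (G_1/n)^{1/2}+(G_2/n)^{1/2}+\epsilon$, hence the sharper conclusion $UQF\geq 1-(G_1/n)^{1/2}-(G_2/n)^{1/2}-\epsilon$ \emph{without} the outer square root appearing in \eqref{eq:lower_random}; this implies the displayed bound (which is vacuous when the sum exceeds $1$ and weaker otherwise) but is not literally it, whereas the paper instead bounds $\lambda_{aux}$ itself, rather than $\lambda_{aux}^{1/2}$, by the Ramanujan ratio, which reproduces the stated form exactly.
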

Theorem \ref{th:rg} implies that, as $n,G_1,G_2\to\infty$ with fixed ratios $n/G_1$ and $n/G_2$, $UQF(q^*_{PF,\{0\}} \Vert \pi)$ remains lower bounded by a strictly positive constant.
In other words, taking $\mathcal{C} = \{0\}$ is enough 
to ensure that the uncertainty quantification of partially factorized variational inference does not deteriorate in such high-dimensional mixed models. 
Also, note that the lower bound in \eqref{eq:lower_random} gets arbitrarily close to $1$ as $n/G_1$ and $n/G_2$ increase.
This suggests that, if $n,G_1,G_2\to\infty$ with $n$ growing faster than $G_1$ and $G_2$, $UQF(q^*_{PF} \Vert \pi)$ goes to $1$. 
We refer to this phenomenon as blessing of dimensionality, since the algorithm performs increasingly better as the amount of data and dimensionality of the problem increases \citep{brito2022spectral}. 
From the mathematical point of view, this phenomenon is a consequence of random graphs being optimally connected (e.g., being expanders) with high probability as $n$ increases. The numerics of Section \ref{sec:numerics} corroborate this.

\subsection{Practical recommendations}\label{sec:recomm}
Theorem \ref{th:rg} shows that, under the assumption of a random design, the graph $\mathcal{G}$ is well-connected with high probability and, as a result, $UQF(q^*_{PF} \Vert \pi)$ is far from $0$. In those situations, it therefore suffices to include in $\mathcal{C}$ only the fixed effects. 
On the other hand, for data designs that are not well-described by random graphs, $\mathcal{G}$ can be poorly connected and $\lambda_{aux}$ be close to $1$. 
A practically important example is the case when one factor is nested into another. One example arises when $K=2$, $G_2=rG_1$ for some $r\in\mathbb{N}$ and the design matrix $N_{1,2}$ is such that $n_{g,h}^{(1,2)}=1$ if $r(g-1)<h\leq rg$ and $0$ otherwise. 
Nested designs of this type commonly arise when one factor is defined as the interaction between other factors; Section~\ref{sec:ghitza} considers this example in detail. In such settings, it typically holds by construction that $\mathcal{G}$ is disconnected and thus $\lambda_{aux}=1$. This means that the inequality in \eqref{eq:loose_bound} becomes an equality and, as a result, $UQF(q^*_{PF} \Vert \pi)$ goes to $0$ whenever $n/G_k\to\infty$ for both $k=1,2$. In that setting taking $\mathcal{C}=\{0\}$ is not sufficient for partially factorized variational inference to achieve good accuracy and the corresponding coordinate ascent algorithm to be scalable. A more statistical intuition is that  nesting creates a strong posterior dependence between $\theta_1$ and $\theta_2 $, 
even after marginalizing out ${\theta}_0$. 

In such cases one needs to increase the size of $\mathcal{C}$. 
For example, in the nested design discussed above where $K=2$ and $I_2=rI_1$, setting $\mathcal{C}=\{0,1\}$ results in $UQF(q^*_{PF} \Vert \pi)=1$, which can be easily deduced from the proof of Proposition \ref{prop:uqf-bound-pf} and the fact that $|\mathcal{U}|=1$ in this case. In more general settings,  increasing the size of $\mathcal{C}$ is potentially undesirable as it must increase the computational burden per iteration due to the  $\mathcal{O}(\left[\sum_{k \in \mathcal{C}} G_kD_k\right]^3)$ costs discussed in Section \ref{sec:comp_cost}.

Interestingly, the connection between the uncertainty quantification fraction and the connectivity of the graph $\mathcal{G}$ suggests a simple strategy to obtain  accurate and scalable variational schemes for models with interaction terms: Include the main effects, i.e., those on which interactions are constructed from, alongside the fixed effects $\theta_0$ in $\mathcal{C}$. More generally, in the case of nested designs, we recommend including in $\mathcal{C}$  an effect $\theta_k$ if (and only if) there are other effects $\theta_{k'}$ which are nested inside of $\theta_k$. 
This strategy ensures that the mean-field assumptions is employed only across factors that are not nested within one another (and thus where a random graph model is more plausible description for the co-occurrence matrix). 
Note that, by design, main effects tend to have a much smaller size than interaction ones (since the size of the latter is a product of sizes of the former) and thus the computational cost  with such strategy will typically be much smaller than setting $\mathcal{C}=\{0,\dots,K\}$. Section \ref{sec:ghitza} explores this strategy and demonstrates that it is preferable to other specifications, in terms of an accuracy-computational cost trade-off.

Finally, it is important to note that our theoretical results about $UQF(q^*_{PF} \Vert \pi)$ 
apply only to $K=2$ and $\mathcal{C}=\{0\}$ 
and thus do not directly cover more complex cases discussed in this section (e.g., those with $K>2$, as it happens in the presence of interactions terms). On the other hand, the above theory gives insight into the type of dependencies that can reduce the effectiveness and allows to design strategies to overcome those. 
Crucially, the numerics of Section \ref{sec:numerics} suggest that the main conclusions of Theorems \ref{th:ff}, \ref{th:pf} and \ref{th:rg}  extrapolate well to more complex situations, such as those with $K>2$, unbalanced designs and random-slopes, and provide useful predictions about the performance of partially factorized variational inference  in real-data examples.

\section{Connections to  alternative variational inference frameworks}\label{sec:lit}

Various works have proposed relaxations of the mean-field assumption,  where some conditional dependencies are retained and others ignored,  in order to improve the accuracy of variational inference,  see, e.g., the work on structured variational inference \citep{saul1995exploiting,salimans2013fixed,hoffman2015structured} or collapsed variational inference \citep{teh2006collapsed}.
These methods have been widely and successfully applied to nested hierarchical models, i.e., models with observations $y=(y_1,\dots,y_n)$, local latent variables $z=(z_1,\dots,z_n)$, global parameters $\xi$, and joint distribution factorizing as $p(\xi,z,y)=p(\xi)\prod_{i=1}^n p(y_i, z_i\mid \xi)$, see, e.g., \citet{hoffman2015structured,loaiza2022fast,fasano2022scalable}.
This framework includes mixed models with one set of nested effects, such as model \eqref{eq:glmm} with $K=1$, for which multiple variational schemes have been proposed \citep{ncp-vi,tan2018gaussian,tan2021rvb}.
Relaxations of mean-field have also been successfully applied to time-series models 
\citep{saul1995exploiting,salimans2013fixed,tan2018gaussian,frazier2023variational}. In this paper, instead, we consider the case of general mixed models, with multiple sets of random effects (i.e., $K>1$ in model \eqref{eq:glmm}) including both nested and crossed effects. 
The resulting posterior dependence structure is computationally more challenging than the one resulting from nested or Markov models, since one cannot directly take advantage of efficient algorithms and computational approaches designed for, e.g., trees or banded matrices, see the discussions in \cite{GaoOwen2017EJS,menictas2023streamlined,papaspiliopoulos2023scalable} as well as Section \ref{sec:comp_cost}.

Contributions analyzing mixed models with non-nested factors include \cite{menictas2023streamlined}, who considers variational inference with blocking, i.e., grouping together some parameters in the mean-field factorization, for two-factor crossed effect models with Gaussian likelihood, and \cite{goplerud2022mavb}, who considers blocking and Polya-Gamma data augmentation for  binomial likelihood.  Using our notation, the blocking analogue of $\mathcal{Q}_{PF}$ would be the family $\mathcal{Q}_{B}= \left\{q\,:\
q(\theta,\phi) = q\left(\theta_{\mathcal{C}}\right) \left[\prod_{k \in \mathcal{U}} q(\theta_{k})\right] q(\phi)\right\}$.
While helpful in improving accuracy, blocking does not manage to simultaneously achieve non-decreasing accuracy in high-dimensions and linear computational cost in $n$ and $p$, see e.g., the comparison and tension between factorizations I, II and III discussed in  \cite{menictas2023streamlined}.
Thus, blocking does not provide the same level of computational-accuracy trade-off as our proposed partial factorization. 
Also, none of these works provide theoretical guarantees or analysis in high-dimensional regimes.
Recently, \cite{xu2023gaussian} considers using composite likelihood approximations to reduce computational cost of variational inference for two-factor crossed effect models with Gamma or Poisson likelihoods and complete observations.

Previously proposed structured variational inference methods allow high-dimensional local parameters to depend on low-dimensional global ones \citep{hoffman2015structured,tan2018gaussian,loaiza2022fast}. 
While this is a computationally feasible and effective strategy for nested models, to the best of our knowledge, it is not clear how to make it work successfully for general mixed models with $K>1$.
Instead, we propose using the reverse factorization, where a set of low-dimensional (typically global)  parameters can depend on the remaining high-dimensional ones. We show theoretically and numerically that this is both computationally feasible, in the sense of implementable with total cost scaling linearly with $n$ and $p$ (Section \ref{sec:comp_cost}), and leads to drastically improved accuracy in high-dimensions (Sections \ref{sec:theory-GLMM} and \ref{sec:numerics}).

In this paper, we employ Polya-Gamma data augmentation \citep{polson2013polyagamma} for binomial mixed models to restore conditional conjugacy and allow for closed-form coordinate ascent updates. The augmentation also facilitates the theoretical analysis in Section  \ref{sec:theory-GLMM}.  
This data augmentation strategy is equivalent to using the popular tangent bound by \cite{jaakkola2000bayesian} for variational inference for logistic regression \citep{durante2019conditionally}.
Another popular alternative is to take a so-called fixed-form variational approach \citep{salimans2013fixed,titsias2014doubly,kucukelbir2017automatic}, seeking to minimize the divergence between the original posterior (without additional latent variables) and a prescribed parametric family of distributions (e.g., Gaussians with diagonal or low-rank covariance). 
Given the lack of conditional conjugacy, this approach either requires (low-dimensional) numerical integration or stochastic optimization, see, e.g., \citet{ormerod2012gva,jeon2017variational,tan2018gaussian,castiglione2022bayesian} for applications to mixed models.
In the Supplement, we numerically compare our proposed approach with two popular fixed-form variational methodologies implemented in STAN, namely 
ADVI \citep{kucukelbir2017automatic} and Pathfinder \citep{zhang2022pathfinder}. Section \ref{sec:summary_add_numerics} provides a summary of the results.

\section{Numerical experiments}
\label{sec:numerics}

\subsection{Empirical estimation of accuracy and stopping rules}
\label{sec:acc}

We  estimate $UQF(q^*(\theta) \Vert p(\theta|y))$, i.e., the uncertainty quantification fraction on the marginal with respect to $\phi$  distribution on fixed and random effects under the posterior distribution and its variational approximation, exploiting its characterisation as $1/\lambda_{max}(\mathrm{cov}_\pi(\theta)\mathrm{cov}_{q^*}(\theta)^{-1})$ established in Proposition \ref{prop:uqf}. The matrix $\mathrm{cov}_{q^*}(\theta)^{-1}$ is directly available using quantities already computed during the coordinate ascent iterations. The  matrix $\mathrm{cov}_\pi(\theta)$ is estimated using Markov chain Monte Carlo samples from the true posterior, drawn using the Gibbs samplers proposed in \cite{papaspiliopoulos2023scalable}, with Polya-Gamma augmentation for a binary outcome, for the experiments in Section \ref{sec:simstud} and Hamiltonian Monte Carlo for those in Section \ref{sec:ghitza} using the code from \cite{goplerud2022mavb}. 

However, since we are dealing with high-dimensional distributions, naively replacing $\mathrm{cov}_\pi(\theta)$ with its sample estimate would incur a large bias when estimating eigenvalues \citep[ch. 6]{wain}. Instead, we follow a split-sample approach where we split the samples into five folds; using four folds, we form a sample estimate $\widehat{\mathrm{cov}_\pi(\theta)}$ and compute the top 50 eigenvectors of  $\widehat{\mathrm{cov}_\pi(\theta)}\mathrm{cov}_{q}(\theta)^{-1}$ where $V$ denotes the matrix of those eigenvectors. Then, using the held-out 20\% of the samples, we compute a sample estimate of $\mathrm{cov}_\pi(V^T\theta)$, which is low-dimensional, and use it to estimate $\lambda_{max}(\mathrm{cov}_\pi(V^T\theta)\mathrm{cov}_{q}(V^T\theta)^{-1})^{-1}$;  we repeat this for each fold and average the estimates. 

We also consider an alternative, more direct, measure of accuracy for scalar-valued functions of the parameters. For a  parameter of interest $x$, we estimate a  linear transformation of the total variation distance of its distribution under the target and its variational approximation:  $1-\frac{1}{2} \int_{-\infty}^\infty | q(x) - \pi(x)|\mathrm{d} x$ (\citealt{faes2011variational}). We compute this as follows: using samples from $\pi$ and $q$, we build a binned kernel density approximation of the densities \citep{wand1994kernel} from which the total variation distance can be easily computed.

Finally, for the coordinate ascent algorithms, we stop  when the change in successive iterations of the ELBO is below a small absolute threshold ($10^{-6}$).

\subsection{Simulation studies}
\label{sec:simstud}

We first evaluate our methodology using the following simulation study. We consider the intercept-only model in \eqref{eq:glmm_K}, both with Gaussian and binomial likelihood, for $K=2$, $G_1=G_2$, and unknown $\phi $.
The co-occurrence matrix  $N_{1,2}$ is taken to be binary with entries
missing completely at random with probability 0.9, and when there is an observation it is generated according to the model (with $\sigma=1$ in the Gaussian model, and $n_i=1$ in the binomial model). We set the prior on $\Sigma_k$ to be $\mathrm{InverseGamma}(1, 0.5)$.

We consider values of $G_1=G_2$ in $(2^5,2^6,\dots,2^{10})$. Note that $n$ also increases by design with increasing $G_1=G_2$. 
Figure \ref{fig:uqf_sim} reports the estimated uncertainty quantification fraction for the three variational approximating families, where results are averaged over 100 simulated datasets, while Figure \ref{fig:sim_times} reports the average time per iteration for running coordinate ascent until the stopping criterion. Number of parameters refers to the total number of random effects and fixed effects, i.e., $1+G_1+G_2$. 

\begin{figure}[!htbp]
  \caption{UQF for simulation study}\label{fig:uqf_sim}
  \begin{center}
    \includegraphics[width=0.8\textwidth]{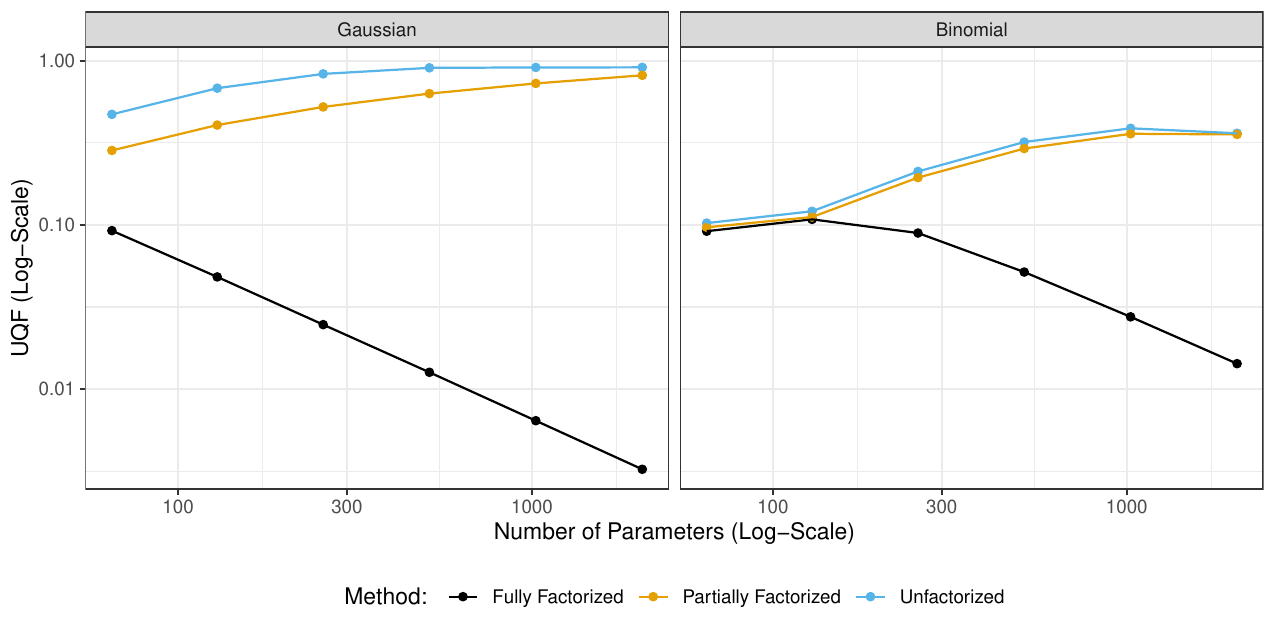}
  \end{center}
\end{figure}
\begin{figure}[!htbp]
  \caption{Time per iteration for simulation study}\label{fig:sim_times}
  \begin{center}
    \includegraphics[width=0.8\textwidth]{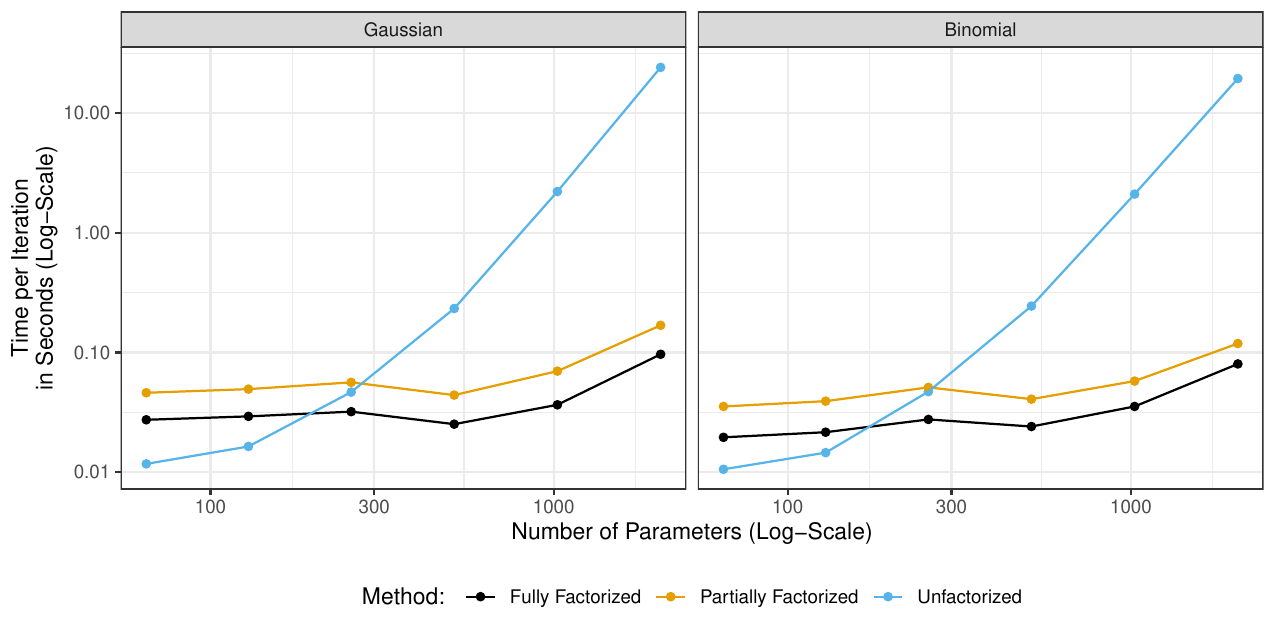}
  \end{center}
\end{figure}

The simulation design violates various assumptions made in Section \ref{sec:theory-GLMM}. In particular, $q(\phi)$ is updated and not kept fixed, and the design matrices are not balanced in the sense of Assumption \ref{ass:balance}. Nevertheless, the numerical results of Figure \ref{fig:uqf_sim}  are in very good qualitative accord with the theorems, e.g. the rate of decay of $UQF(q_{FF}^*(\theta) \Vert p(\theta\mid  y))$ and the increase of $UQF(q_{PF}^*(\theta) \Vert p(\theta\mid y))$ with dimensionality. With the binomial likelihood, performance is worse, possibly due to the approximations needed to enable closed-form coordinate ascent updates; partially factorized variational inference, however, still does enjoy the blessing of dimensionality.

Figure~\ref{fig:sim_times} shows that for very small models, unfactorized variational coordinate ascent is faster per iteration than the other implementations; this is likely due to the specific implementation of the algorithms, e.g., for loops required for the latter and are avoided in the former.  However, for larger models, the computational cost per iteration of unfactorized grows super-linearly, as expected.

\subsection{Modelling voter turnout using multilevel regression and post-stratification}
\label{sec:ghitza}

Multilevel Regression and Post-Stratification \citep{park2004mrp} is  a  method for small-area estimation using survey data that is increasingly popular in political science (e.g., \citealt{lax2009estimation,warshaw2012district,broockman2018bias,lax2019party}) and  proceeds as follows. The combination of the categorical covariates in the survey 
define types of respondents (hereafter referred to as ``types''), e.g.,  white male voters, aged 18-29, with high-school education in Pennsylvania.
A model as in \eqref{eq:glmm} is trained on the survey data and predicts the outcome given a type of respondent. 
Then, census data is typically used to derive the number of people of a given type in each geographic location, although see, e.g., \cite{kastellec} for other possibilities. The geographic-level estimate is then obtained by taking a weighted average of the model predictions for each type using the census frequencies. 
\cite{ghitza} propose using deep models that consist of many interactions between the different demographic and geographic categorical factors recorded in the survey; \cite{goplerud2023reeval} discusses the advantages of these models compared to other machine learning approaches for building predictive models using survey data. 
\cite{goplerud2022mavb}  develops a framework for variational inference that implements either fully factorized or unfactorized variational inference, using the terminology we establish in this article,  and applies it to the data in \cite{ghitza}. There, it is observed that fully factorized variational inference  has poor uncertainty quantification while computational cost of unfactorized variational inference becomes too large for the deeper models.

This section uses the models and design in \cite{goplerud2022mavb}. The response variable is binary and records whether a respondent voted in previous elections. It considers nine model specifications with  increasingly complex interactions among basic categorical covariates: the first model is a simple additive model with four categorical factors (age, ethnicity, income group, and state) and the ninth model is highly complex with eighteen categorical factors (including, e.g., all two-way and three-way interactions among the  original four factors). All models also include some fixed effects, see  Table~\ref{tab:gg_models_main}  summarizes the number of random effects involved in these models, and more details are provided in the Supplement.  \cite{ghitza} consider both the 2004 and 2008 elections separately; we follow this but pool or average the results from each election in the following figures for simplicity. In each model, there are 4,080 types; in each election, and a dataset of around 75,000 (weighted) observations.

\begin{table}
  \tbl{Brief description of models for voter turnout}
{  \begin{tabular}{lrrrrrrrrr}
    Model & 1 & 2 & 3 & 4 & 5 & 6 & 7 & 8 & 9\\
    $dim(\theta_{\mathcal{C}})$ & 6 & 6 & 27 & 129 & 139 & 139 & 139 & 139 & 139 \\
    $dim(\theta_{\mathcal{U}})$ & 64 & 123 & 158 & 719 & 784 & 864 & 1884 & 2700 & 3720
  \end{tabular}
}
	\begin{tabnote}
Increasing indices correspond to deeper interactions; models 1 and 2 include  basic categorical factors, from model 3 two-way interactions of categorical factors are introduced, from model 6 three-way interactions; $dim(\theta_{\mathcal{C}})$ is the number of parameters partially factorized and includes fixed and main effects; the unfactorized algorithm decomposes matrices of size $dim(\theta_{\mathcal{C}})+dim(\theta_{\mathcal{U}})$, whereas the partially factorized of size $dim(\theta_{\mathcal{C}})$. 
	\end{tabnote}
         \label{tab:gg_models_main}
\end{table}

For the partially factorized approximations, we follow the principle suggested in Section \ref{sec:recomm} and include fixed and, when interactions exist, the corresponding main effects in $\mathcal{C}$. We compare the variational approximations  against a gold standard posterior sampled using Hamiltonian Monte Carlo using the NUTS/STAN engine. 
Paralleling the analyses in Section~\ref{sec:simstud}, we present two sets of results to illustrate the relative performance of the methods.  Figure~\ref{fig:uqf_gg} shows, as predicted by the theory, that the uncertainty quantification fraction for fully factorized is close to zero for all models whereas that for partially factorized remains drastically larger even for the most complex models. We also see that adding  the main effects to $\mathcal{C}$ makes a significant difference  when there are interactions in the model (i.e., for models 3 to 9). 
\begin{figure}[!htbp]
  \caption{$UQF$ for voter turnout}\label{fig:uqf_gg}
  \begin{center}
    \includegraphics[width=0.8\textwidth]{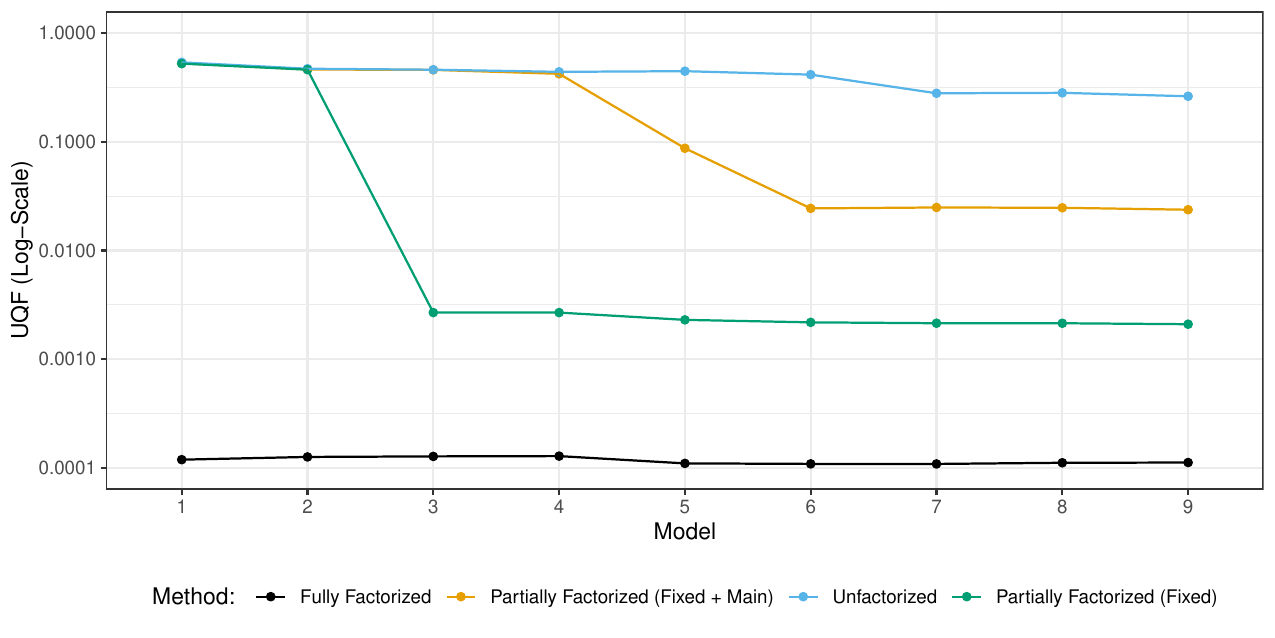}
  \end{center}
\end{figure}
We also consider the accuracy of the variational approximations on key quantities of interest to applied researchers.  One is  state-level estimates of turnout that are computed using post-stratification.  Another is  the linear predictor for each of the  4,080 types.
Figure~\ref{fig:acc_gg} provides  boxplots of estimated  accuracy scores across all parameters of interest and datasets. We see that, for the state-level estimates, partially factorized variational inference with main effects are included in $\mathcal{C}$ performs well even for the most complex models. For the linear predictor, the results are roughly analogous, with the only exception being that partially factorized variational inference with main effects included in $\mathcal{C}$ is accurate for most types for models from 1 to 6, but a tail of inaccurate types appears for models 7 to 9. 
Investigating this phenomenon suggested that this tail typically corresponds to rare types, i.e. cells in the contingency table with very few observations. As these observations are downweighted during the post-stratification stage, it is still possible to see the high levels of accuracy found in the left panel. On the other hand, we have empirically found that full factorization, in addition to underestimating the posterior variance for rare types, tends to dramatically overestimate the variance of common types, which in turns contributes to the low accuracy on the left panel.  

\begin{figure}[!htbp]
  \caption{Accuracy of estimates for voter turnout}\label{fig:acc_gg}
  \begin{center}
    \includegraphics[width=0.8\textwidth]{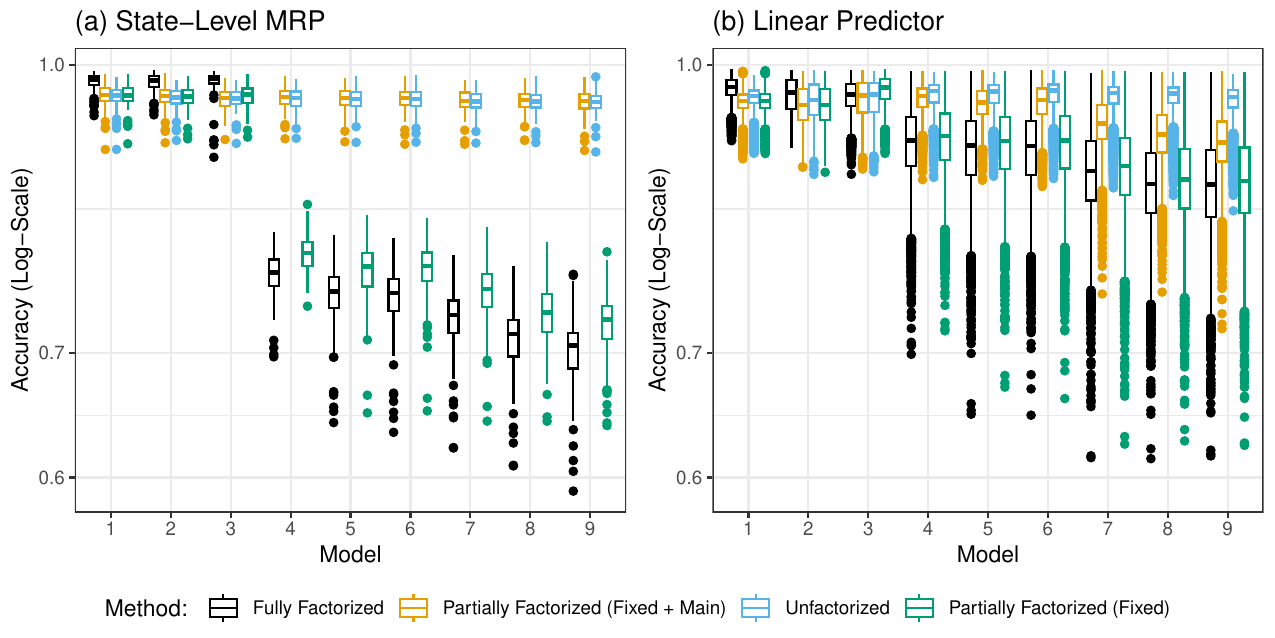}
  \end{center}
\end{figure}

Overall,  for the majority of models and quantities of interests, partially factorized variational inference with $\mathcal{C}$ chosen as proposed in Section \ref{sec:recomm} achieves an accuracy comparable to unfactorized variational inference but at a fraction of computational time. Using the stopping criterion noted above, estimation using partially factorized variational inference (including main effects in $\mathcal{C}$) takes around 1.5 minutes for the most complex model (Model 9), versus around 20 minutes for unfactorized variational inference.

\subsection{Additional numerical comparisons}\label{sec:summary_add_numerics}

This section summarizes two additional sets of numerical comparisons that are reported in detail in the Supplement. We first compare against fixed form variational inference and the implementations in ADVI \citep{kucukelbir2017automatic} and Pathfinder \citep{zhang2022pathfinder}. For Gaussian likelihood, we find that ADVI, which relies on Gaussian distributions whose covariance is restricted to be diagonal, performs similarly to fully factorized variational inference, whereas  
Pathfinder, which relies on Gaussian distributions with diagonal plus low-rank covariance matrices, displays a more constant uncertainty quantification fraction as dimensionality increases, albeit considerably lower than the one estimated from partially factorized variational inference. 
For binomial likelihood,  ADVI and Pathfinder perform better  than the methods we consider in this article in smaller-scale problems, indicating that there is some penalty to using the data-augmentation approach we adopt. However, as dimensionality increases they both give worse results than partially factorized variational inference.
On the real data example partially factorized variational inference consistently out-performs both ADVI and Pathfinder across all metrics that we consider in this paper. Pathfinder's performance appears most similar to partially factorized variational inference where only the fixed effects are included in $\mathcal{C}$. We also investigate thoroughly accuracy and convergence for the variance components, the terms that are included in the vector $\phi$ using our notation. In the simulations, there are modest differences between full, partially, and unfactorized families for less complex models; for larger models, the accuracy of all three increases and the differences between them decrease, although fully factorized variational inference remains inferior to partially factorized which is itself inferior to unfactorized. In the Ghitza-Gelman example, fully factorized performs poorly, while unfactorized is very similar to partially factorized with the fixed and main effects in  $\mathcal{C}$.

\section{Ongoing and future work}
\label{sec:conc}

It is possible to apply our theoretical framework to study related variational schemes discussed in Section \ref{sec:lit}, for example blocking. It is interesting to try and extend it to more general settings, such as log-concave distributions. Although  we have focused on the use of variational fully Bayesian inference, our methods and theory can appropriately be modified for   different inferential frameworks, e.g., for restricted maximum likelihood using a variational expectation-maximization algorithm or methods based on Laplace approximations. We also expect our methods to extend to  other likelihood functions using, for example,  non-conjugate variational message passing \citep{knowles2011non,wand2014fully,ncp-vi} or Gaussian variational approximations \citep{ormerod2012gva}. On the modelling side, our framework immediately extends to more flexible, and often inferentially more appropriate, priors on $\Sigma_k$  \citep{goplerud2023reeval}. The accompanying R package \href{https://github.com/mgoplerud/vglmer}{vglmer} includes a  more flexible partial factorization scheme where  $\mathcal{C}$  includes only some levels $g$ of a factor $k$ in $\mathcal{C}$. One could also use data-driven criteria to decide automatically on $\mathcal{C}$. Finally,  additional acceleration techniques, e.g., parameter-expansion   \citep{goplerud2023reeval} can be incorporated. 
 
\section*{Supplementary Materials}

Supplementary material available at \emph{Biometrika} online includes proofs of propositions and theorems, details of the random scan coordinate ascent algorithm, and additional figures and details for the numerical experiments in Section~\ref{sec:numerics}.

\bibliographystyle{biometrika}
\bibliography{biblio}

\clearpage
\setcounter{section}{0}
\setcounter{figure}{0}
\setcounter{table}{0}
\setcounter{theorem}{0}
\setcounter{equation}{0}
\setcounter{proposition}{0}
\setcounter{assumption}{0}
\setcounter{definition}{0}
\setcounter{proof}{0}
\setcounter{restrictions}{0}

\begin{abstract}
	This supplementary material contains additional proofs, figures, and numerical results for the paper ``Partially factorized variational inference for high-dimensional mixed models.''
\end{abstract}

\section{Proofs of main results}

\subsection{Proofs of Section \ref{sec:theory}}

\begin{proof}[of Proposition \ref{prop:uqf}]
	To simplify exposition let $\Sigma,\widetilde{\Sigma}$ be the covariance matrices of $\theta$ under $\pi$ and $q$ respectively and $Q,\widetilde{Q}$ the corresponding precision matrices. 
	Given that $\widetilde{\Sigma}$ is positive definite a square root $\widetilde{\Sigma}^{1/2}$ is well-defined and without loss of generality we will take below the symmetric square root. 
	Notice first that $\mathrm{var}_{q}(\theta^T v) = v^T \tilde{\Sigma} v$, with an analogous expression for $\mathrm{var}_\pi(\theta^T v)$. Then, 
	since the supremum over all $v \in \mathbb{R}^d$ is the same over all vectors $\widetilde{\Sigma}^{1/2} v$ (due to the assumed invertibility of the matrix) we have that
	$$
	UQF(q \Vert \pi) = \inf_{u} {u^T u \over u^T \widetilde{Q}^{1/2} \Sigma\widetilde{Q}^{1/2} u}= 1 / \sup_{u: ||u||=1} u^T \widetilde{Q}^{1/2} \Sigma \widetilde{Q}^{1/2} u.
	$$
	Recall now that for a positive definite matrix the supremum of quadratic forms over all vectors of norm 1 is its maximum eigenvalue, and that the maximum eigenvalue of a positive definite matrix is the inverse of the minimum eigenvalue of its inverse. Recall also that the  eigenvalues of $\widetilde{Q}^{1/2} \Sigma \widetilde{Q}^{1/2}$ are the same as those of
	$ \widetilde{Q}^{-1/2}\widetilde{Q}^{1/2} \Sigma \widetilde{Q}^{1/2} \tilde{Q}^{1/2}$ since this is a similarity transform. These observations lead to \eqref{eq:uqf-eigen}.
\end{proof}

\begin{proof}[of Proposition \ref{prop:uqf-bound-pf}]
	
	By assumption $\pi(\theta)\equiv \Nor(\theta; \mu,Q^{-1})$; let $\mu,Q$ be organized according to $\mathcal{U},\mathcal{C}$, hence, e.g., $Q_{\mathcal{UU}}$ denotes the block of the matrix corresponding to the $\theta_k$ for $k\in \mathcal{U}$. Since $\pi(\theta)$ is Gaussian then both $\pi(\theta_{\mathcal{C}} \mid  \theta_{\mathcal{U}})$ and $\pi(\theta_{\mathcal{U}})$ are Gaussian, the precision of the latter is given by $Q_{\mathcal{UU}} - Q_{\mathcal{UC}}Q_{\mathcal{CC}}^{-1} Q_{\mathcal{CU}}$. 
	From the proof of Proposition \ref{prop:PFCAVI} we recall that  $q^*(\theta_{\mathcal{C}} \mid  \theta_{\mathcal{U}}) = \pi(\theta_{\mathcal{C}} \mid \theta_{\mathcal{U}})$, it is unique and  is obtained after a single update, and that the coordinate ascent variational inference iterates for $q(\theta_{k})$, $k \in \mathcal{U}$, minimize the KL divergence between the marginal $\pi(\theta_{\mathcal{U}})$ and its factorized approximation. It is then known (see, e.g., Chapter 10 of \citealt{bishop}) that  for $\pi(\theta_{\mathcal{U}})$ Gaussian, $\{q^*(\theta_{k})\}$ is unique and is Gaussian with mean $\mu_k$ and precision the $k$'th diagonal block of $Q_{\mathcal{UU}} - Q_{\mathcal{UC}}Q_{\mathcal{CC}}^{-1} Q_{\mathcal{CU}}$; let $\widetilde{Q_{\mathcal{UU}}}$ denote the block-diagonal matrix that contains these precision matrices. Therefore, $q^*(\theta_{\mathcal{C}},\theta_{\mathcal{U}})$ is unique, it is Gaussian with mean $\mu$ and precision matrix with the same blocks as $Q$ with the exception of that $\widetilde{Q_{\mathcal{UU}}}$ replaces $Q_{\mathcal{UU}}$.
	
	By appealing to Proposition \ref{prop:uqf} and its proof, we get that $UQF(q^*\Vert \pi)$ is the smallest eigenvalue of the product of the precision under $\pi$ and the covariance under $q^*$. A careful calculation using both of the two standard expressions for the inverse of a block matrix (the one that inverts the top left and the one that inverts the bottom right) shows  that the product is a block lower triangular matrix with diagonal elements the identity matrix for the $\mathcal{CC}$-block and
	$$
	(Q_{\mathcal{UU}} - Q_{\mathcal{UC}}Q_{\mathcal{CC}}^{-1} Q_{\mathcal{CU}})
	(\widetilde{Q_{\mathcal{UU}}})^{-1},
	$$
	for the $\mathcal{UU}$-block. We can recognize the second term in the product as $\mathrm{cov}_{q^*}(\theta_{\mathcal{U}})$, as obtained by block-wise inversion. 
	Therefore, 
	the $\mathcal{UU}$-block can be expressed generically as $P\widetilde{P}^{-1}$ for $P$ positive definite organized in $K$ blocks and $\widetilde{P}$ block-diagonal containing the diagonal blocks of $P$. Therefore, the matrix that determines $UQF$ will have the eigenvalue 1 with multiplicity $|\mathcal{C}|$ and the rest of its eigenvalues are those of $P\widetilde{P}^{-1}$. We now show that the minimum eigenvalue of such a matrix is less than 1. 
	
	First, note that due to a similarity transformation the eigenvalues of $P\widetilde{P}^{-1}$ are the same as those of $\widetilde{P}^{-1/2}P\widetilde{P}^{-1/2}$. This is a matrix  with  $|\mathcal{U}|$ diagonal blocks that are identity matrices. Consider a vector $w$ with constant entries on the 1st block and 0 everywhere else, normalized to have norm 1. Then
	$$
	\lambda_{min}(\widetilde{P}^{-1/2}P\widetilde{P}^{-1/2})= \inf_{v:v^T v=1} v^T \widetilde{P}^{-1/2}P\widetilde{P}^{-1/2}  v \leq w^T \widetilde{P}^{-1/2}P\widetilde{P}^{-1/2}  w = w^T w = 1\,.
	$$
	
\end{proof}

See Section \ref{sec:rs-cavi} in the Supplement for details on the random scan version of coordinate ascent we work with here. 

\begin{proof}[of Theorem \ref{th:rs-pf-cavi-conv}]

	For given $t$, let $s=|\mathcal{U}| t$, and let $q_{s}(\theta) =   \pi(\theta_{\mathcal{C}}|\theta_{\mathcal{U}}) q_{s}(\theta_{\mathcal{U}})$   be the density returned after $|\mathcal{U}| t$ iterations of random scan coordinate ascent variational inference, where $q_{s}(\theta_{\mathcal{U}}) =  \left[\prod_{k \in \mathcal{U}} q_{s}(\theta_{k})\right]$. Denoting by $q^*(\theta)$ and $q^*(\theta_{\mathcal{U}})$ the corresponding optimal at convergence, the proof of Proposition \ref{prop:uqf-bound-pf} shows that  $UQF(q^*(\theta) \Vert \pi(\theta)) = UQF(q^*(\theta_{\mathcal{U}}) \Vert \pi(\theta_{\mathcal{U}}))$. It also follows easily that $KL(q_{s}(\theta) \vert \pi(\theta)) = KL(q_{s}(\theta_{\mathcal{U}}) \vert \pi(\theta_{\mathcal{U}}))$ with analogous equality for the optimized densities. Therefore,  for the rest of the proof it suffices to focus on the algorithm that approximates $\pi(\theta_{\mathcal{U}})$ by the mean-field $\prod_{k\in \mathcal{U}} q(\theta_k)$ using a random scan coordinate ascent variational inference.
	
	Given the assumption that $\pi(\theta)$ is Gaussian, we have that $\pi(\theta_{\mathcal{U}})$ is Gaussian and let $\mu,Q$ denote its mean and precision, and let $b = Q \mu$.  Then, it is well-known (see, e.g., Chapter 10 of \citealt{bishop}) that when updating the $k$'th term in the approximation, the coordinate-wise optimum is    $ \Nor(m_{k}, (Q_{k,k})^{-1})$, where the mean depends on the means of all other densities via the relation
	\[
	Q_{k,k} m_{k} = Q_{k,k} m_{k} - \sum_{\ell} Q_{k,\ell} m_{\ell} + b_{k}\,.
	\]
	The following basic considerations allow us to simplify the study of convergence of coordinate ascent variational inference in this context. First, note that convergence of coordinate ascent variational inference is precisely that of the convergence of the means $m^{(k)}$ to their stationary point, since  the corresponding precision matrices converge after one iteration to their limit. Second, the convergence of the linear dynamical system given above does not depend on $b$ and remains the same if we do the coordinate-wise reparameterisation $m_{k} \to Q_{k,k}^{1/2}  m_{k}$. When taking also into account the first expression for $UQF$ in Proposition \ref{prop:uqf} we conclude that without loss of generality in the rest of the proof we can take  $\mu = 0$ and that $Q^{(k,k)} = I$ for each block $k$. Notice then that
	\[
	UQF(q^* \Vert \pi) = \lambda_{min}(Q)<1,
	\]
	where the upper bound is due to Proposition \ref{prop:uqf-bound-pf}.  
	Letting $m_t$ be the vector of means after $s$ iterations, a direct calculation yields that
	\[
	V(q^* \Vert \pi) - V(\pi \Vert q_s) = {1\over 2} m_s^T Q m_s\,.
	\]

	Under the above considerations, we can now focus on obtaining convergence bounds for  the random scan coordinate descent for minimizing $V(m) = m^T Q m/2$, with $m$ organized in $|\mathcal{U}|$ blocks and $Q$ having identity diagonal  blocks, with $m_t$ denoting the vector after $t$ random coordinate minimizations.

	Let $\mathcal{F}_s$ be the filtration generated by $I_1,\ldots,I_s$. Then, a careful calculation obtains the following equality, whereas the inequality is a basic eigenvalue bound:
	\[
	\mathbb{E}[V(m_{s+1}) \mid \mathcal{F}_s] = V(m_s) - {1\over |\mathcal{U}|} {1\over 2} m_s^T Q^2 m_s \leq  \left (1 -{1 \over |\mathcal{U}|}\lambda_{min}(Q)\right ) V(m_s)\,,
	\]
	from which we directly obtain 
	\[
	\mathbb{E}[V(m_{|\mathcal{U}|t}]  \leq \left(1 - {\lambda_{min}(Q) \over |\mathcal{U}|}\right)^{|\mathcal{U}|t} V(m_0)\,.
	\]
	
	For the lower bound, note first that
	\[
	\mathbb{E}[m_{s+1} \mid \mathcal{F}_s] = (I- Q/|\mathcal{U}|) m_s\,,
	\]
	an implication of which is that if $m_0$ is an eigenvector of $Q$ with eigenvalue $\lambda$ then $\mathbb{E}[m_{s+1}] = (1-\lambda/|\mathcal{U}|)^{s+1} m_0$. Since for positive semi-definite $Q$, $V(m)$ is a convex function, we can apply Jensen's inequality to obtain:
	\[
	\mathbb{E}[V(m_{|\mathcal{U}|t})] \geq \mathbb{E}[m_{|\mathcal{U}|t}]^T Q  \mathbb{E}[m_{|\mathcal{U}|t}] = (1-\lambda/|\mathcal{U}|)^{2|\mathcal{U}|t} V(m_0)\,.
	\]
	The proof is concluded by taking $m_0$ to be the eigenvector that corresponds to $\lambda_{min}(Q)$.
\end{proof}

\subsection{Proofs of Section \ref{sec:theory-GLMM}}

Throughout these proofs we use the following definitions. We define a  block matrix $N$ that contains the weighted counts $n_{g,h}^{(k,\ell)}$ defined in \eqref{eq:counts}, organized according to the numbering of the factors and their levels. For $Q$ and $\widetilde{Q}$ the precision matrices under the target and the variational approximation, we define
\begin{equation}
\label{eq:q-bar}
\bar{Q} = \widetilde{Q}^{-1/2} Q\widetilde{Q}^{-1/2}.
\end{equation}
When the target is Gaussian, $\mathcal{C} = \emptyset$ implies that $\widetilde{Q}$ coincides with $Q$ on diagonal blocks and it is zero elsewhere, which in turn implies that $\bar{Q}$ is a matrix whose diagonal blocks  are identity matrices.  
The following generic Lemma is used to prove Theorem \ref{th:ff}.

\begin{lemma}
	\label{lem:uqf-bound}
	Consider $Q,\widetilde{Q},\bar{Q}$ as in \eqref{eq:q-bar} and vectors $v_k,v_\ell$ for $k\neq \ell$ such that $v_k$ has only non-zero elements in the $k$-th block, and $v_\ell$ accordingly. Then:
	\begin{equation*}
	UQF(q \Vert \pi) \leq 1 - {v_k^T \bar{Q} v_\ell \over \|v_k\| \|v_\ell\|}\,.
	\end{equation*}
\end{lemma}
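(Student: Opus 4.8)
The plan is to combine the eigenvalue characterization of $UQF$ from Proposition \ref{prop:uqf} with a judicious choice of test vector in the Rayleigh quotient for $\lambda_{min}(\bar{Q})$. First I would note that, since $\mathrm{cov}_\pi(\theta)^{-1} = Q$ and $\mathrm{cov}_q(\theta)^{1/2} = \widetilde{Q}^{-1/2}$, Proposition \ref{prop:uqf} gives
\[
UQF(q \Vert \pi) = \lambda_{min}\bigl(\widetilde{Q}^{-1/2} Q \widetilde{Q}^{-1/2}\bigr) = \lambda_{min}(\bar{Q}),
\]
so it suffices to upper bound the smallest eigenvalue of the symmetric positive definite matrix $\bar{Q}$. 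Positive definiteness is inherited from $Q$ through the congruence by $\widetilde{Q}^{-1/2}$, which guarantees that $\lambda_{min}(\bar{Q})$ equals the infimum of the Rayleigh quotient $u^T \bar{Q} u / (u^T u)$ over nonzero $u$.

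The second step is to exploit the block structure recalled just above the statement: because $\widetilde{Q}$ agrees with $Q$ on diagonal blocks and vanishes off them in the $\mathcal{C} = \emptyset$ case, each diagonal block of $\bar{Q}$ is an identity matrix. Consequently, for any vector $v_k$ supported on block $k$ one has $v_k^T \bar{Q} v_k = \|v_k\|^2$, and similarly for $v_\ell$; moreover $v_k$ and $v_\ell$ are orthogonal whenever $k \neq \ell$, since their supports are disjoint.

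The third step is to plug the test vector $u = v_k/\|v_k\| - v_\ell/\|v_\ell\|$ into the Rayleigh quotient. Writing $\hat{v}_k = v_k/\|v_k\|$ and $\hat{v}_\ell = v_\ell/\|v_\ell\|$, orthogonality gives $u^T u = 2$, while the identity diagonal blocks give $\hat{v}_k^T \bar{Q} \hat{v}_k = \hat{v}_\ell^T \bar{Q} \hat{v}_\ell = 1$, so that
\[
\frac{u^T \bar{Q} u}{u^T u} = \frac{2 - 2\, \hat{v}_k^T \bar{Q} \hat{v}_\ell}{2} = 1 - \frac{v_k^T \bar{Q} v_\ell}{\|v_k\|\,\|v_\ell\|}.
\]
Since $\lambda_{min}(\bar{Q})$ is bounded above by this quotient, and since $UQF(q \Vert \pi) = \lambda_{min}(\bar{Q})$, the claimed inequality follows at once.

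This argument presents no serious obstacle: the only things to get right are the choice of test vector and the bookkeeping of the block structure. The main conceptual point is recognizing that subtracting the two normalized block-supported vectors isolates exactly the cross term $v_k^T \bar{Q} v_\ell$, while the self terms collapse to $1$ by virtue of the identity diagonal blocks of $\bar{Q}$.
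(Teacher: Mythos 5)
Your proof is correct and follows essentially the same route as the paper: both reduce $UQF(q\Vert\pi)$ to $\lambda_{min}(\bar{Q})$ via Proposition \ref{prop:uqf}, exploit that the diagonal blocks of $\bar{Q}$ are identities, and bound the Rayleigh quotient on the span of the two normalized block-supported vectors. The paper minimizes over that whole two-dimensional subspace (computing the smallest eigenvalue of the $2\times 2$ restriction, which equals $1-C_{12}$), whereas you simply evaluate the quotient at the single combination $\hat{v}_k-\hat{v}_\ell$; this yields the identical bound.
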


\begin{proof}
	For $u_k = v_k/\|v_k\|^2$ and $u_\ell$ defined accordingly, note that they are orthogonal and normalized to 1. Consider the $2\times 2$ matrix
	\begin{equation*}
	C = \left( \begin{array}{c} u_k^T \\ u_\ell^T 
	\end{array}\right) \bar{Q} (u_k \quad u_\ell) \,.
	\end{equation*}
	By construction $C_{ii} = 1$ for $i=1,2$ and $C$ inherits positive definitess from $\bar{Q}$. Then:
	\begin{align*}
	UQF(q \Vert \pi) & = \lambda_{min}(\bar{Q}) \leq \min_{\gamma_1,\gamma_2} {(\gamma_1 u_k  + \gamma_2 u_\ell )^T \bar{Q} (\gamma_1 u_k + \gamma_2 u_\ell )  \over \gamma_1^2+\gamma_2^2}   \\
	& = \min_{\gamma} {\gamma^T C \gamma \over \|\gamma\|^2} = \lambda_{min}(C) = 1-C_{12}\,.
	\end{align*}
	Replacing by $v_k,v_\ell$ completes the proof.
\end{proof}

\begin{proof}[ of Theorem \ref{th:ff}]
	Directly from Proposition \ref{prop:pi} for $\mathcal{C}=\emptyset$, we obtain that the precision of $\pi(\theta)$, denoted by $Q$ below, is 
	\begin{align}
	Q_{k,k} & = T_k I  + \mathrm{diag}[(n_{g}^{(k)})_{g=1:G_k}], \quad \quad Q_{k,\ell} = N_{k,\ell}, \quad k,\ell >0, k\neq \ell \label{eq:Q_def_1}\\
	\quad Q_{0,0} & = \delta, \quad Q_{0,k} = (n_g^{(k)})_{g=1:G_k}, k>0\,,
	\label{eq:Q_def_2}
	\end{align}
	where $\delta=\sum_{g=1}^{G_k} n_{g}^{(k)}= 
	n\bar{D}$. 
	It is well known (see also proof of Theorem \ref{th:rs-pf-cavi-conv}) that for such target the optimized approximation of the variational problem in \eqref{eq:fix-phi} is Gaussian with precision $\widetilde{Q}$ given by the diagonal blocks of $Q$ above. Then, for $\bar{Q}$ as in \eqref{eq:q-bar}, a direct calculation yields that
	\begin{align*}
	\bar{Q}_{k,k} & = I, k>0, \quad   \bar{Q}_{0,0} = 1, \quad (\bar{Q}_{k,\ell})_{g,h} = b_{g}^{(k)} b_h^{(\ell)}  n^{(k,\ell)}_{g,h},\quad (\bar{Q}_{0,k})_g = {b_g^{(k)}\over \delta^{1/2}}  n_g^{(k)} , k,\ell>0\,, 
	\end{align*}
	for
	\[
	b_g^{(k)} = \left(1 \over T_k + n^{(k)}_g\right)^{1/2}\,,
	\]
	and where $(\bar{Q}_{k,\ell})_{g,h}$ denotes the $(g,h)$ element of the block $\bar{Q}_{k,\ell}$. 
	Consider now $v_{k},k>0$ a vector with 0's except for positions on the $k$'th block, where the $g$'th position in the $k$th block takes the value $1/b_g^{(k)}$, and $v_{0}$ which is non-zero only the 0'th position and takes the value $\delta^{1/2}$. Then note that:
	\begin{align*}
	v_{0} \bar{Q} v_{k} & = \sum_{g=1}^{G_k} n_g^{(k)}, \quad \|v^{(0)}\| = \left(\sum_i \delta_i\right)^{1/2}, \quad \|v^{(k)}\| = \left(G_k T_k + \sum_{g=1}^{G_k} n_g^{(k)}\right)^{1/2}\,.   
	\end{align*}
	Appealing to Lemma \ref{lem:uqf-bound} we obtain the result. 
\end{proof}

We first state a preliminary lemma.
\begin{lemma}\label{lemma:2_by_2_uqf}
	Let  $M=(M_{k,\ell})_{k,\ell=1,2}$ be a symmetric positive definite $2\times 2$ block matrix, and 
	\begin{equation*}
	C = 
	\left( \begin{array}{cc} M_{1,1}&0\\ 0 & M_{2,2}\end{array}\right) ^{-1/2}
	\left( \begin{array}{cc} M_{1,1} &M_{1,2} \\ M_{2,1} & M_{2,2}\end{array}\right) 
	\left( \begin{array}{cc} M_{1,1}&0\\ 0 & M_{2,2}\end{array}\right) ^{-1/2}
	\,,
	\end{equation*}
	Then, the minimum modulus eigenvalue of $C $ is
	$$\lambda_{min}(C)=1-\rho(M_{1,1}^{-1} M_{1,2} M_{2,2}^{-1}M_{2,2})^{1/2}$$
	where $\rho(A)$ denotes the largest modulus eigenvalue of a matrix $A$. 
\end{lemma}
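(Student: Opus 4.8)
The plan is to exploit the special block structure induced by the congruence defining $C$. Writing $D = \mathrm{diag}[M_{1,1}, M_{2,2}]$ for the block-diagonal part of $M$, so that $C = D^{-1/2} M D^{-1/2}$, I would first note that each diagonal block of $C$ collapses to the identity, $M_{k,k}^{-1/2} M_{k,k} M_{k,k}^{-1/2} = I$, while the off-diagonal block becomes $B := M_{1,1}^{-1/2} M_{1,2} M_{2,2}^{-1/2}$. Hence $C = \begin{pmatrix} I & B \\ B^{T} & I \end{pmatrix}$, and $C$ is positive definite because it is congruent to the positive definite $M$; in particular its minimum modulus eigenvalue is just its smallest (positive) eigenvalue.

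Next I would relate the spectrum of this structured matrix to the singular values of $B$. Taking a singular value decomposition $B = U \Sigma V^{T}$ and conjugating $C$ by the orthogonal matrix $\mathrm{diag}[U, V]$ turns the off-diagonal block into $U^{T} B V = \Sigma$, so $C$ is orthogonally similar to $\begin{pmatrix} I & \Sigma \\ \Sigma^{T} & I \end{pmatrix}$. Reading off its eigenvalues, one gets $1 \pm \sigma_i$ for each singular value $\sigma_i$ of $B$, plus further eigenvalues equal to $1$ arising from any size mismatch between the two blocks. Since $C$ is positive definite all of these are positive, forcing $\sigma_{\max}(B) < 1$, and the smallest eigenvalue is therefore $\lambda_{\min}(C) = 1 - \sigma_{\max}(B)$.

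It then remains to identify $\sigma_{\max}(B)$ with the quantity in the statement. I would use $\sigma_{\max}(B)^2 = \lambda_{\max}(B B^{T})$ together with $B B^{T} = M_{1,1}^{-1/2}(M_{1,2} M_{2,2}^{-1} M_{2,1}) M_{1,1}^{-1/2}$, which is similar to $M_{1,1}^{-1} M_{1,2} M_{2,2}^{-1} M_{2,1}$ via conjugation by $M_{1,1}^{-1/2}$. Similar matrices share eigenvalues, and because $B B^{T}$ is positive semidefinite those eigenvalues are real and nonnegative, so $\rho(M_{1,1}^{-1} M_{1,2} M_{2,2}^{-1} M_{2,1}) = \lambda_{\max}(B B^{T}) = \sigma_{\max}(B)^2$. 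Substituting yields $\lambda_{\min}(C) = 1 - \rho(M_{1,1}^{-1} M_{1,2} M_{2,2}^{-1} M_{2,1})^{1/2}$, which is the claim once the evident typo $M_{2,2}^{-1} M_{2,2}$ in the statement is read as $M_{2,2}^{-1} M_{2,1}$.

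I expect the only genuine subtlety to be the bookkeeping when $M_{1,1}$ and $M_{2,2}$ have different sizes, so that $B$ is rectangular: there one must keep track of the extra eigenvalues equal to $1$ to be sure that $1 - \sigma_{\max}(B)$, and not some spurious quantity, is truly the minimum. Everything else reduces to routine congruence and similarity manipulations, so no serious obstacle is anticipated.
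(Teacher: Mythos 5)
Your proof is correct and follows essentially the same route as the paper: both reduce $C$ to the form $I$ plus a block off-diagonal matrix built from $B = M_{1,1}^{-1/2} M_{1,2} M_{2,2}^{-1/2}$, identify the spectrum as $\{1 \pm \sigma_i(B)\}$ together with extra eigenvalues from the size mismatch, and convert $\sigma_{\max}(B)^2$ into $\rho(M_{1,1}^{-1} M_{1,2} M_{2,2}^{-1} M_{2,1})$ by similarity. The only cosmetic difference is that you establish the bipartite spectral fact directly via an SVD conjugation, whereas the paper cites it from the random-graph literature; you also correctly flag the typo $M_{2,2}^{-1} M_{2,2}$ in the statement, which should read $M_{2,2}^{-1} M_{2,1}$.
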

\begin{proof}
	Let $d_1$ and $d_2$ be the dimensionality of $M_{1,1}$ and $M_{2,2}$, respectively.
	Assume $d_2\geq d_1$ without loss of generality. 
	We have 
	$$
	C = I_{d_1+d_2}+\tilde{C}\,,
	\qquad
	\tilde{C}=  \left( \begin{array}{cc} 0_{d_1}&a\\ a^T & 0_{d_2}\end{array}\right)
	$$
	where $a=M_{1,1}^{-1/2}M_{1,2}M_{2,2}^{-1/2}$, $I_d$ denote a $d\times d$ identity matrix and $0_d$ denotes a $d\times d$ matrix of zeros.
	It follows that the spectrum of $C$ coincides with the one of $\tilde{C}$ traslated by $+1$.
	The matrix $\tilde{C}$ can be interpreted as the adjacency matrix of a bipartite graph and it can be easily shown that its spectrum coincides with the union of $1$, with multiplicity $d_2-d_1$, the singular values of $A$ and the singular values of $A$ with negative sign, see, e.g., \cite{brito2022spectral}. 
	It follows that $\rho(C)=1-\rho(aa^T)^{1/2}$, where we also used that the singular values of $A$ have modulus less than $1$, which follows from $a=M_{1,1}^{-1/2}M_{1,2}M_{2,2}^{-1/2}$ and the positive definiteness of $Q$.
	To conclude, note that 
	$$\rho(aa^T)=
	\rho(M_{1,1}^{-1/2}M_{1,2}M_{2,2}^{-1}M_{1,2}M_{2,2}^{-1/2})
	=
	\rho(M_{1,1}^{-1}M_{1,2}M_{2,2}^{-1}M_{1,2})\,.$$
\end{proof}

\begin{proof}[of Theorem \ref{th:pf}]
	By Proposition \ref{prop:pi} $\pi(\theta)$ is Gaussian. Thus we can apply the proof of Proposition \ref{prop:uqf-bound-pf} to deduce $UQF(q^*(\theta) \Vert \pi(\theta)) = UQF(q^*(\theta_{\mathcal{U}}) \Vert \pi(\theta_{\mathcal{U}}))$.
	By Proposition \ref{prop:uqf}
	\begin{equation*}
	UQF(q^*(\theta_{\mathcal{U}}) \Vert \pi(\theta_{\mathcal{U}})) =
	\lambda_{min}(\mathrm{cov}_{q^*}(\theta_{\mathcal{U}})^{1/2} \mathrm{cov}_\pi(\theta_{\mathcal{U}})^{-1} \mathrm{cov}_{q^*}(\theta_{\mathcal{U}})^{1/2} )\,.
	\end{equation*}
	Denoting the precision matrix of $\pi(\theta_{\mathcal{U}})$ as 
	a $2\times 2$ block matrix 
	$$M=\mathrm{cov}_\pi(\theta_{\mathcal{U}})^{-1}=\left( \begin{array}{cc} M_{1,1} &M_{1,2} \\ M_{2,1} & M_{2,2}\end{array}\right) 
	$$  and recalling that $\mathrm{cov}_{q^*}(\theta_{\mathcal{U}})^{-1}$ is equal to $M$ on the diagonal blocks and zero elsewhere, Lemma \ref{lemma:2_by_2_uqf} implies
	\begin{equation*}
	UQF(q^*(\theta_{\mathcal{U}}) \Vert \pi(\theta_{\mathcal{U}})) =
	1-\rho(M_{1,1}^{-1} M_{1,2} M_{2,2}^{-1}M_{2,2})^{1/2}\,.
	\end{equation*}
	By $\mathcal{C}=\{0\}$, $M=\mathrm{cov}_\pi(\theta_{\mathcal{U}})^{-1}$ and standard properties of Gaussian distributions, we have 
	$$M=Q_{\mathcal{UU}} - Q_{\mathcal{UC}}Q_{\mathcal{CC}}^{-1} Q_{\mathcal{CU}}\,,$$ where $Q_{\mathcal{UU}}$, $Q_{\mathcal{UC}}$, $Q_{\mathcal{CC}}$ and $Q_{\mathcal{CU}}$ denote blocks of the precision matrix $Q=\mathrm{cov}_\pi(\theta)^{-1}$ defined in \eqref{eq:Q_def_1}-\eqref{eq:Q_def_2}.
	Exploiting Assumption \ref{ass:balance} it can be shown that the entries of $M$ satisfy
	\begin{align*}
	M_{k,\ell}
	&=
	\frac{\delta}{G_k}\left(I_k  -G_{k}^{-1}H_{k}\right)K_{k\ell}
	&k\neq \ell\\
	M_{k,k}
	&=
	\left(T_k+G_k^{-1}\delta\right)I_k  -G_k^{-2}\delta H_{k}
	&
	\end{align*}
	where $I_k$ is the $G_k\times G_k$ identity matrix, $H_k$ is the $G_k\times G_k$ matrix of ones and $K_{k\ell}=G_k\delta^{-1}N_{k,\ell}$ are the stochastic matrices defined in the theorem statement.
	
	Noting that $M_{k,k}^{-1}=\left(T_k+G_k^{-1}\delta\right)^{-1}I_k  +c_kH_{k}$ for some $c_k\in\mathbb{R}$ and that $H_{k}\left(I_k  -G_{k}^{-1}H_{k}\right)=O_{k}$, where $O_k$ denotes the $G_k\times G_k$ matrix of zeros, we obtain
	\begin{align*}
	M_{k,k}^{-1}M_{k,\ell}
	&=
	\frac{\delta}{G_kT_k+\delta}\left(I_k  -G_{k}^{-1}H_{k}\right)K_{k\ell}
	&k\neq \ell
	\end{align*}
	and thus
	\begin{align*}
	M_{1,1}^{-1}M_{1,2}
	M_{2,2}^{-1}M_{2,1}
	&=
	\left(\prod_{k=1}^2{\delta \over G_k T_k + \delta }\right)
	\left(I_1  -G_{1}^{-1}H_{1}\right)K_{12}
	\left(I_2  -G_{2}^{-1}H_{2}\right)K_{21}\,.
	\end{align*}
	The statement follows noting that
	\begin{align}\label{eq:sle}
	\rho(
	\left(I_1  -G_{1}^{-1}H_{1}\right)K_{12}
	\left(I_2  -G_{2}^{-1}H_{2}\right)K_{21}
	)
	=
	\rho(
	K_{12}K_{21}\left(I_1  -G_{1}^{-1}H_{1}\right)
	)
	=
	\lambda_2(K_{12}K_{21})
	\,,
	\end{align}
	where $\lambda_2(A)$ denotes the second largest modulus eigenvalue of a matrix $A$.
	The 
	last equality in \eqref{eq:sle} follows from the fact that $K_{12}K_{21}$ is a stochastic matrix, and thus a largest modulus eigenvector for it is given by the vector of ones, which vanishes upon multiplication with the matrix $I_1  -G_{1}^{-1}H_{1}$.
	
\end{proof}

\begin{proof}[of Theorem \ref{th:rg}]
	By Theorem \ref{th:pf}
	$UQF(q^* \Vert \pi) \geq 1 - \lambda_{aux}^{1/2}$.
	Let $d_k=n/G_k$ for $k=1,2$.
	By Theorem 3.2 of \cite{brito2022spectral}, for any $\epsilon' > 0$, we have
	$$\lambda_{aux}\leq \frac{(d_{1}-1)^{1/2}+(d_{2}-1)^{1/2}+\epsilon'}{(d_1d_2)^{1/2}}\leq
	d_{1}^{-1/2}+d_{2}^{-1/2}
	+
	\epsilon'
	$$
	almost surely as $n \to \infty$ for fixed $d_{1}$ and $d_{2}$ (larger or equal than $3$).\footnote{The condition $d_{1}, d_{2}\geq 3$ is assumed at the beginning of \cite{brito2022spectral} and not explicitly recalled in Theorem 3.2 therein.}
\end{proof}

\section{Random-scan coordinate ascent}
\label{sec:rs-cavi}

Theorem \ref{th:rs-pf-cavi-conv} analyzes the random scan version of partially factorized coordinate ascent and for completeness we provide here its description. We focus on the variational problem defined in Assumption \ref{ass:sub-cavi-gauss-pi}. 
For this problem, as obtained in the proof of Proposition \ref{prop:PFCAVI}, $q^*(\theta_{\mathcal{C}} | \theta_{\mathcal{U}}) = \pi(\theta_{\mathcal{C}} | \theta_{\mathcal{U}})$ and is found after a single update. Therefore, the algorithm updates this term only once and randomly scan through the other terms to perform updates.
In order to match computational costs between $T$ iterations of random and deterministic scan versions, Algorithm \ref{alg:rs-pf-cavi} performs $|\mathcal{U}|T$ random-scan updates instead of $T$. Equivalently, at each random-scan update in Algorithm \ref{alg:rs-pf-cavi} we increase time by $1/K$. 
\begin{algorithm}[htbp]
	\KwData{target distribution $\pi(\theta)$; sets $\mathcal{C}\subseteq \{0,\ldots, K\}$  and $\mathcal{U} = \{0, \ldots, K\} \setminus \mathcal{C}$; Set $t\gets 0$}
	\While{$t<T$}
	{
		Pick $k$ uniformly at random from $\mathcal{U}$\;
		Set $q_{t+1/K}(\theta_{k'})=q_{t}(\theta_{k'}) $ for  $k'\in{\mathcal{U}}\backslash k$\;
		Set
		$
		q_{t+1/K}(\theta_k) \propto 
		\exp\left \{\mathbb{E}_{q_{t}(\theta_{\mathcal{U}\backslash k})}[\log 
		\pi(\theta_{\mathcal{U}})] \right\}
		$\;
		Set $t\gets t+1/K$\;
	}
	\KwResult{$q_T (\theta) 
		=
		\pi(\theta_{\mathcal{C}} \mid \theta_{\mathcal{U}})
		\left[\prod_{k \in \mathcal{U}} q_{T}(\theta_{k})\right]$}
	\caption{Random-scan partially factorized coordinate ascent targeting $\pi$
		\label{alg:rs-pf-cavi}}
\end{algorithm}

\section{Additional numerical experiments}

\subsection{Details on voter turnout models}

First, we provide some additional details on the nine models we consider for voter turnout. The following table summarizes the main characteristics of the models.

\begin{table}[!htbp]
	\caption{Description of models}\label{tab:gg_models}
	\begin{center}
		\begin{tabular}{lp{8.5cm}rr}
			Model & Description & $\sum_{k\in\mathcal{C}}G_kD_k$ & $\sum_{k\in\mathcal{U}}G_kD_k$ \\
			Model 1 & Additive model (age, income, state, ethnicity) &   6 &   64\\
Model 2 & Adds random slopes for (continuous) income on age, state, and ethnicity &   6 &  123\\
Model 3 & Adds two-way interactions between age, income, and ethnicity &  27 &  158\\
Model 4 & Adds two-way interactions between state and demographic variables & 129 &  719\\
Model 5 & Adds region (e.g., ``South'') and interactions between region and demographics & 139 &  784\\
Model 6 & Adds three-way interaction between age, income, and ethnicity & 139 &  864\\
Model 7 & Adds three-way interaction between state, ethnicity, and income & 139 & 1884\\
Model 8 & Adds three-way interaction between state, ethnicity, and age & 139 & 2700\\
Model 9 & Adds three-way interaction between state, income, and age & 139 & 3720\\

		\end{tabular}
	\end{center}
\end{table}

\subsection{Comparisons Against Alternative Variational Methods}

We replicate the key results in our manuscript on the UQF (e.g., Figures~\ref{fig:uqf_sim} and Figures~\ref{fig:uqf_gg}) and on the accuracy of predictions in the Ghitza-Gelman example (Figure~\ref{fig:acc_gg}) using two generic methods for variational inference that are implemented in STAN. We do this using \texttt{cmdstanr} (with \texttt{CmdStan} version 2.34.1). ADVI is estimated using its mean-field variant on default settings. For Pathfinder, we examine two versions; first, one (``Pathfinder'') that uses mostly the default settings but adjusted to have 20 paths. Second, we one use with a longer history (``Pathfinder (LH)'') to see if this improves performance. 
Since the default setting of using importance resampling resulted in poor performance for the models under consideration, the results presented here do not use this technique.

For the simulations, we draw 100,000 samples from $q(\theta)$ for each variational approximation and use this to estimate the UQF. The longer history Pathfinder uses a history length of 100. In the main manuscript, the UQF for our variational approximations is calculated using the analytical form of $\mathrm{cov}_{q}(\theta)$. To address potential concerns about bias in the estimated UQF for the fixed-form VI methods, Figure~\ref{fig:sim_uqf_supp} reports the results in the main text (``Analytical'') alongside the results using samples from $q(\theta)$ (``Sampled''). The results for fully factorized, partially factorized, and unfactorized variational inference look similar between both estimation methods suggesting that the number of samples used in this section is sufficient to obtain reliable estimates.

\begin{figure}[!ht]
	\caption{Additional UQF results on simulation study}
	\includegraphics[width=0.8\textwidth]{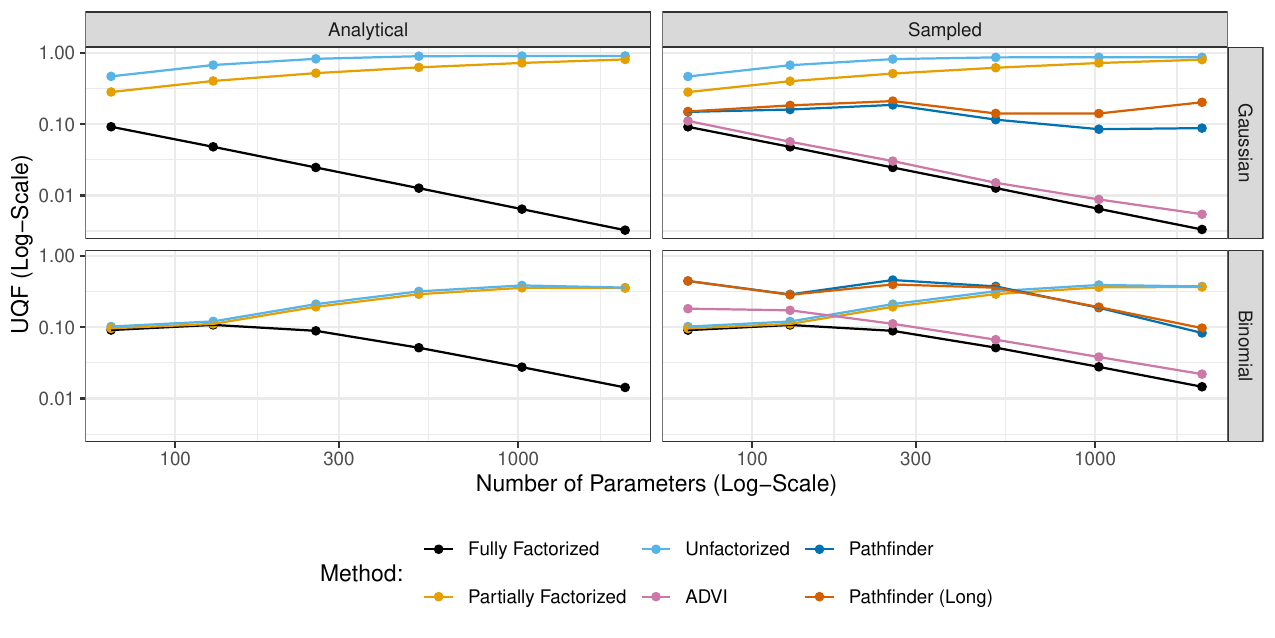}\label{fig:sim_uqf_supp}
\end{figure}

Figure~\ref{fig:sim_uqf_supp} shows that, in the Gaussian case, ADVI behaves similarly to the fully factorized approximation, i.e., the UQF deteriorates as $G_k$ grows. By contrast, both versions of Pathfinder look more similar to the partially-factorized approach: the UQF remains relatively stable with $G_k$, albeit below the values estimated from partially factorized and unfactorized variational inference. In the binomial setting, Pathfinder and ADVI both have better UQF than fully factorized, partially factorized and unfactorized variational inference for smaller models, although for the larger models they perform worse.

\begin{figure}[!ht]
	\caption{Additional UQF results for voter turnout}
	\includegraphics[width=0.8\textwidth]{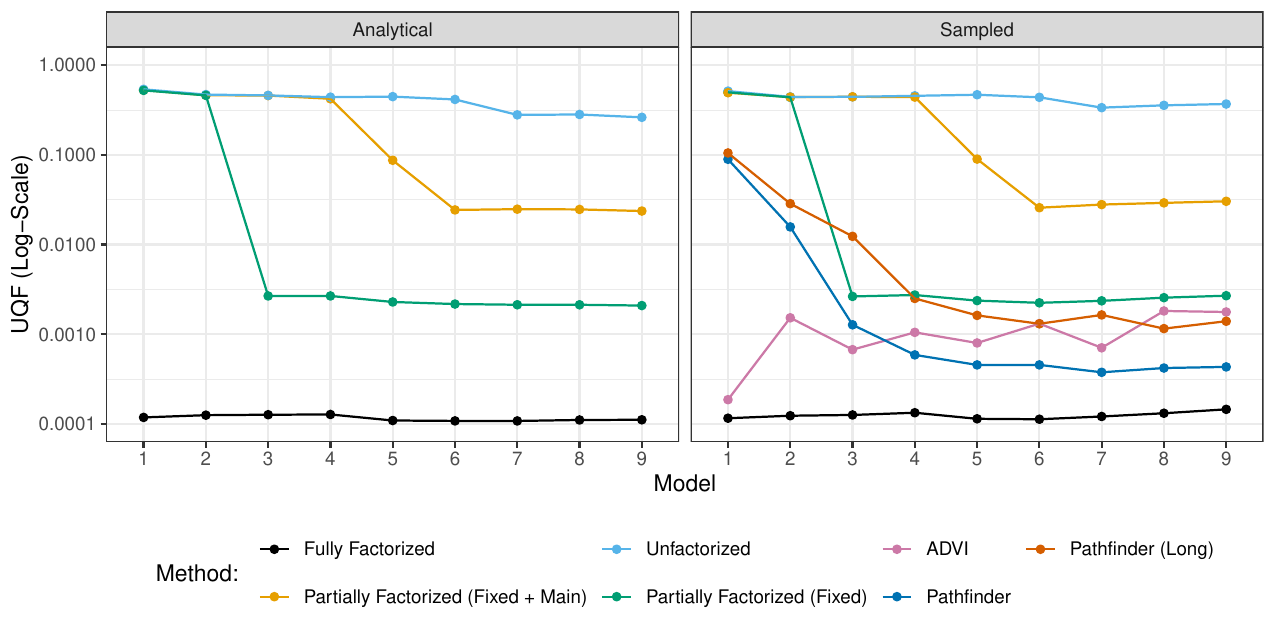}\label{fig:gg_uqf_supp}
\end{figure}

For the Ghitza-Gelman example, we draw 20,000 samples from $q(\theta)$ and use this to calculate the UQF. The longer history Pathfinder uses a history length of 150, slightly larger than the maximum size of the collapsed set on the most complex model (139). Figure~\ref{fig:gg_uqf_supp} shows that partially factorized variational inference (with fixed and main effects in the collapsed set) outperforms Pathfinder (both versions). Both versions of Pathfinder out-perform fully factorized variational inference, while ADVI is typically between the two and somewhat unstable across different models.

\begin{figure}[!ht]
	\caption{Additional accuracy results for voter turnout}	\label{fig:gg_acc_supp}
	\includegraphics[width=0.8\textwidth]{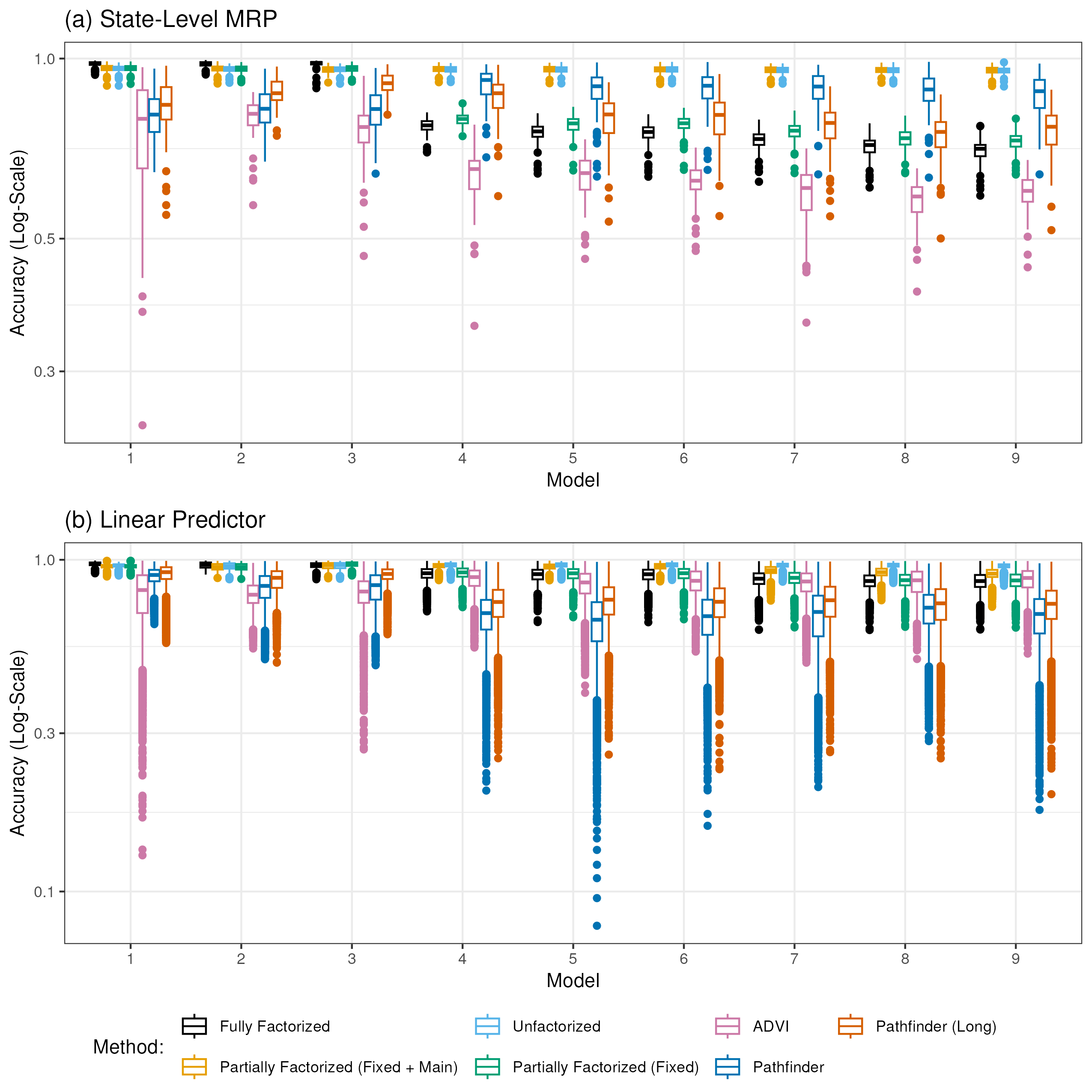}
\end{figure}

Finally, Figure~\ref{fig:gg_acc_supp} shows the accuracy on the post-stratified state-level prediction and 4,080 types of individuals. This shows more clearly the benefit to unfactorized, partially factorized, and fully factorized variational inference since, for the simplest models, all clearly out-perform Pathfinder and ADVI. As the model complexity grows, only unfactorized and partially factorized variational inference show consistently strong performance.

\subsection{Accuracy of Variance Components}

For the simulations, Figure~\ref{fig:varcomp_sim} reports the average accuracy of the variational approximations of the two variance components across all of the simulations. Its results are qualitatively consistent with the results on the UQF reported in the main manuscript, i.e. fully factorized variational inference is worse than partially factorized which is worse than unfactorized. In the Gaussian setting, there is clear evidence of the blessing of dimensionality as the accuracy increases and approaches one as the number of levels in each factor increases. In the binomial setting, this eventually appears asymptotically as the problem grows larger, although the accuracy remains lower than in the Gaussian case. This mirrors the patterns seen in the UQF results in the main text.

\begin{figure}[!htbp]
	\caption{Accuracy for variance components (simulation study)}\label{fig:varcomp_sim}
	\includegraphics[width=0.8\textwidth]{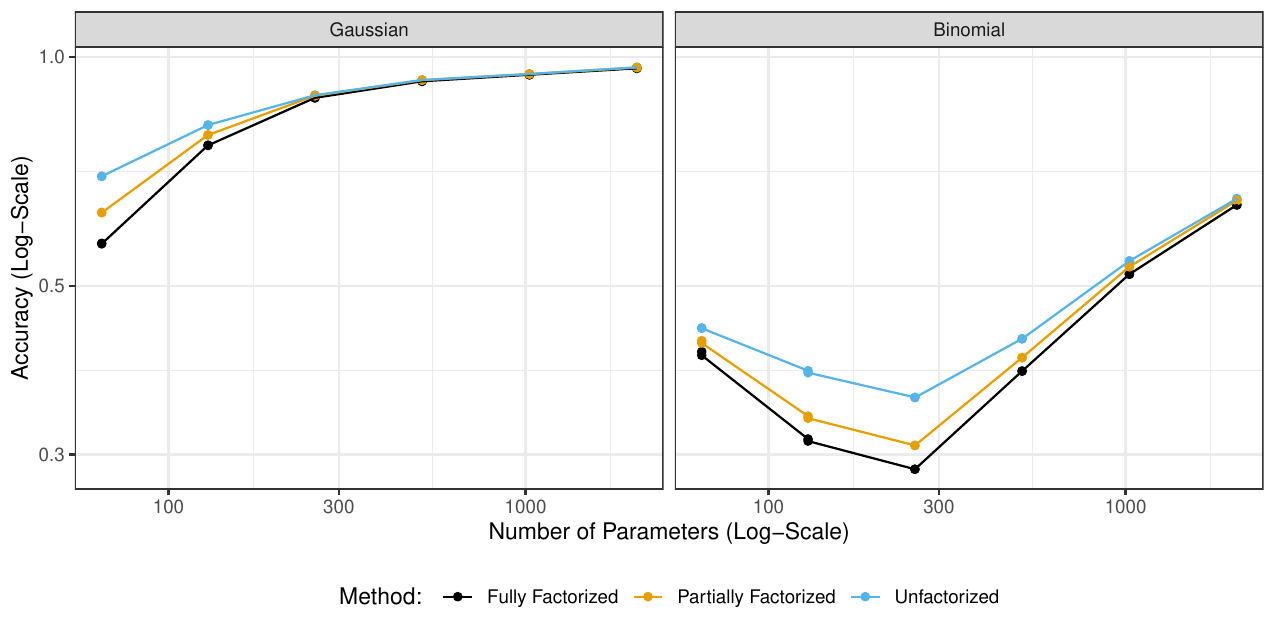}
\end{figure}

\begin{figure}
	\caption{Accuracy for variance components (Ghitza-Gelman)}\label{fig:varcomp_gg}
	\includegraphics[width=0.8\textwidth]{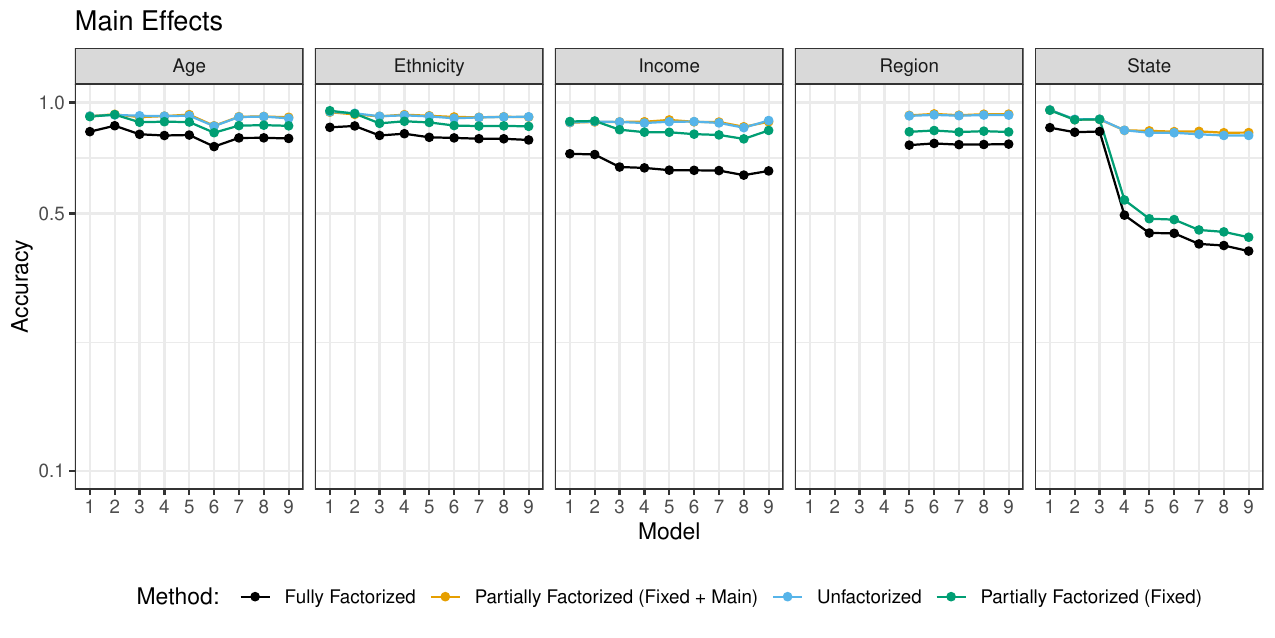}
	\includegraphics[width=0.8\textwidth]{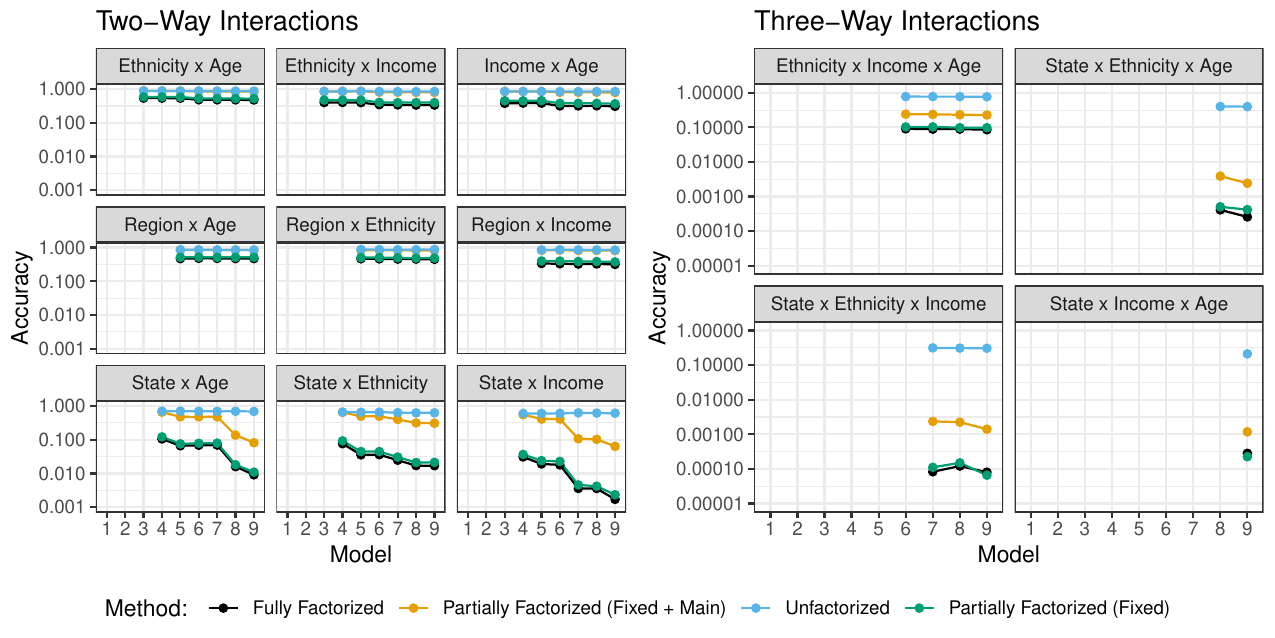}
\end{figure}
For the Ghitza-Gelman example, we calculate the accuracy of each variance component, averaged across both years in the data. For factors with random slopes (i.e., $D_k = 2$), we calculate the accuracy separately for each of the $D_k^2$ components and average them together. Figure~\ref{fig:varcomp_gg} shows results for accuracy on the main effects (i.e., age, ethnicity, income, state, region), two-way interactions (e.g., state-income), and three-way interactions (e.g., age-state-income) separately to allow for the differences to be more clearly displayed.

It suggests that all methods (fully factorized, partially factorized, and unfactorized variational inference) do well for the main demographic effects (age, ethnicity, and income) across all models. However, from Model 3 onwards, there is a sharp decline in the accuracy for fully factorized variational inference and partially factorized variational inference with only fixed effects in $\mathcal{C}$; this corresponds to the model where the UQF for that form of partially factorized variational inference also drops. Looking at the two- and three-way interactions, there are certain factors for which there is a clear gap between partially factorized variational inference (with main effects in $\mathcal{C}$) and unfactorized variational inference and others where performance is similar. However, in all cases, partially factorized variational inference (with main effects in $\mathcal{C}$) performs considerably better than fully factorized variational inference. Broadly speaking, the qualitative patterns of accuracy mirror that shown in the UQF in Figure~\ref{fig:uqf_gg}.

\end{document}